\newtheorem{define}{\textbf{Definition}}
\newtheorem{theorem}{\textbf{Theorem}}
\newtheorem{problem}{\textbf{Problem}}
\newtheorem{corollary}{\textbf{Corollary}}
\begin{document}
%
\title{Attract More Miners to Join in Blochchain Construction for Internet of Things}
%
%
%

\author{Xingjian~Ding,
        Jianxiong~Guo,
        Deying~Li,
        and~Weili~Wu,~\IEEEmembership{Member,~IEEE}
\thanks{This work was supported by the National Natural Science Foundation of China (Grant NO.11671400, 61972404, 61672524). (\emph{Corresponding author: Deying Li.})
}
\thanks{X. Ding and D. Li are with School of Information, Renmin University of China, Beijing, 100872, China (e-mail: dxj@ruc.edu.cn; deyingli@ruc.edu.cn).}
\thanks{J. Guo and W. Wei are with the Department of Computer Science, Erik Jonsson
School of Engineering and Computer Science, University of Texas at Dallas,
Richardson, TX, 75080, USA (e-mail: jianxiong.guo@utdallas.edu; weiliwu@utdallas.edu).}
}

%
%

\markboth{Journal of \LaTeX\ Class Files,~Vol.~14, No.~8, August~2015}%
{Shell \MakeLowercase{\textit{et al.}}: Bare Demo of IEEEtran.cls for IEEE Journals}
%



\maketitle

\begin{abstract}
The world-changing blockchain technique provides a novel method to establish a secure, trusted and decentralized 
system for solving the security and personal privacy problems in Industrial Internet of Things (IIoT) applications. The mining 
process in blockchain requires miners to solve a proof-of-work puzzle, which requires high computational power. 
However, the lightweight IIoT devices cannot directly participate in the mining process due to the limitation of power
and computational resources. The edge computing service makes it possible for IIoT applications to build a blockchain
network, in which IIoT devices purchase computational resources from edge servers and thus can offload their computational
tasks. The amount of computational resource purchased by IIoT devices depends on how many profits they can get 
in the mining process, and will directly affect the security of the blockchain network. 
In this paper, we investigate the incentive 
mechanism for the blockchain platform to attract IIoT devices to purchase more computational power from edge servers
to participate in the mining process, thereby building a more secure blockchain network. We model the interaction
between the blockchain platform and IIoT devices as a two-stage Stackelberg game, where the blockchain platform act as
the leader, and IIoT devices act as followers. We analyze the existence and uniqueness of the Stackelberg equilibrium, and 
propose an efficient algorithm to compute the Stackelberg equilibrium point. Furthermore, we evaluate the performance of
our algorithm through extensive simulations, and analyze the strategies of blockchain platform and IIoT devices under
different situations.
\end{abstract}

\begin{IEEEkeywords}
Industrial internet of things, blockchain, cloud mining, incentive mechanism,  Stackelberg game.
\end{IEEEkeywords}

%
\IEEEpeerreviewmaketitle


%
%
%
%

\section{introduction}
\IEEEPARstart{C}{urrently}, Internet of Things (IoT) has attracted more and more attention in many areas, such as smart cities, 
agricultures, health care, industry, etc. It is estimated that the total number of connected IoT devices will be 50 billion
by the end of 2020 \cite{IoTdevicesNum}. 
Specifically, the widespread application of the IoT has stimulated the evolution of factories to the fourth industrial revolution 
(Industry 4.0) \cite{shrouf2014smart}. 
Industrial IoT (IIoT) provides interconnection to smart factories by connecting different types of industrial machines and devices,
which helps to realize the intelligent manufacturing. 
To deal with the huge number of IIoT devices, a traditional centralized architecture is applied 
to provide services for IIoT devices, where IIoT devices are connected to a cloud server through the internet.
With the rapid growth of the number of IIoT devices and the performance requirement of the IIoT applications, 
however, the traditional centralized IIoT architecture faces many challenges, such as security, personal 
privacy, bandwidth constraint, and service delay \cite{rehman2019cloud}. 
To avoid these issues, some works introduce decentralized 
peer-to-peer (P2P) architectures for IIoT applications \cite{krco2005p2p,mietz2013p2p,chung2016p2p}, 
where each device can exchange information or trade
directly with other devices without a third-party organization. 
However, these P2P architectures still  face with 
security and personal privacy issues.

In the past few years, blockchain, as a world-changing technology, has shown its excellent properties in many fields
\cite{novo2018blockchain,xu2019blockchain-crowdsourcing,yang2019guest,wan2019blockchain}.
Blockchain is a decentralized public ledger that stores data in a list of blocks, these blocks are linked using 
cryptography, each block stores the hash value of the previous block. No centralized server is required for
maintaining the blockchain, instead, all the blocks are copied and shared by each user in the blockchain network,
and the blockchain is maintained by multiple participants in the P2P network. Blockchain provides a novel 
technology that helps establish a secure, trusted and decentralized system for solving the security and 
personal privacy problems in IIoT applications. To generate a new block, the participants (miners) of the blockchain 
need to solve a proof-of-work (PoW) puzzle \cite{nakamoto2019bitcoin}, which is hard to be solved but easy to be validated. 
The one who
first solve the puzzle has the right to packet a new block (the process is called mining), and will get a reward from
the platform. Nevertheless, the lightweight 
IIoT devices cannot directly participate in the mining process due to the huge computational resources requirement.

The edge computing architecture makes it possible for IIoT applications to establish a blockchain network, where IIoT devices
can offload the computational tasks to edge servers \cite{liu2018computation,guo2019blockchain,liu2018distributed}. 
Specifically,  incentivized by the reward from the platform for 
packeting a new block, each IIoT device will purchase a certain amount of computational resources (such as CPU and GPU) 
from edge servers
to participate in the mining process for maximizing its own profit. Some existing works \cite{xiong2018mobile,yao2019resource} 
study the pricing problem
between resource provider (edge server) and miners (IIoT devices), i.e., the resource provider offer a price to maximize its own profit, 
and miners decide their demand for computational resource to maximize their payoffs.  However, these works 
didn't pay attention to the safety of the blockchain network. 
Generally, the blockchain platform will dynamically adjust
the threshold value of the hash puzzle to stabilize the generating speed of the new blocks. That is, if the total computational
power of all miners is small, the platform will give a large hash threshold value, otherwise, it will give a small hash threshold
value.
The total computational power of all miners will directly affect the safety of the blockchain network. An attacker who 
wants to tamper with the context in a block of the blockchain needs to solve the hash puzzle faster than the current whole network.
Thus it's more harder for the attacker to change the context of blocks if all miners provide more computational power
in the mining process.

In this paper, therefore, we study the incentive mechanism of the blockchain platform to motivate IIoT devices to purchase 
more computational resources to participate in the mining process, thus that a more secure blockchain network can be 
established for IIoT.
The main contributions of this paper are listed as follows.

\begin{itemize}
	\item We design an incentive mechanism for the IIoT blockchain platform where the blockchain platform provides a certain
	reward to participating IIoT devices, to attract IIoT devices to purchase more computational power from edge servers 
	to participate in the mining process, thereby building a more secure blockchain network. 
	\item We analyze the relationship between the security of the blockchain network and the total computational power of
	the entire network, and give the probability that an attacker can successfully tamper with the blockchain.
	\item  We formulate the interaction between blockchain platform and miners as a two-stage Stackelberg game. We 
	analyze the existence and uniqueness of the Stackelberg equilibrium, and propose an efficient algorithm to compute 
	the Stackelberg equilibrium point.
	\item We conduct extensive simulations to evaluate the performance of our proposed algorithm, and we analyze
	the strategies of platform and miners in different situations. Our work is helpful for the IIoT blockchain
	platform to set a reasonable reward pricing strategy to maximize its utility, which is closely related to the security
	of the blockchain network.
\end{itemize}

The remainder of this paper is structured as follows. 
In Section \ref{Sec:related}, we introduce the related works of this paper. In Section \ref{sec:Problem}, we describe 
the system model and analyze the blockchain security that motivated our problem, and then we formulate our
problem as a two-stage Stackelberg game. In Section \ref{sec:Game analysis}, we analyze the existence and 
uniqueness of the Stackelberg equilibrium, and give the best strategies for miners and blockchain platform.
We conduct extensive performance evaluations in Section \ref{sec:Simulation}. And finally, we conclude this paper
in Section \ref{sec:Conclusions}.

\section{related works} \label{Sec:related}
Recently, there are numerous works concentrate on the IoT blockchain networks. 
Huang \emph{et al.} \cite{huang2019building} build a redactable consortium blockchain based on the first threshold 
chameleon hash and accountable and sanitizable chameleon signature schemes, which is efficient for lightweight IIoT
devices to operate.
Fu \emph{et al.} \cite{fu2019cooperative} propose an integrated architecture of cooperative computing
to support the demand of computing power in IoT blockchain networks, their goal is to maximize the system energy efficiency.
Xu \emph{et al.} \cite{xu2019become} propose a blockchain-enabled computation offloading method to ensure data integrity
for IoT in mobile edge computing.
To overcome the huge operational overhead in energy trading market in the industrial IoT, which is resulted by the 
frequent transactions, Hou \emph{et al.} \cite{hou2019local} investigate the local electricity storage for the blockchain-based 
energy trading in the IIoT. Their solutions can achieve a good tradeoff between credit utility and operational 
overhead. 
The authors in \cite{viriyasitavat2019new} integrate technologies such as service-oriented architecture, public
key infrastructure and enablers for Service Selection with blockchain technology, and propose a new 
blockchain-based architecture for service interoperation in IoT, which ensures data validity and guarantees the trust
of quality of service attributes for service selection.
Xu \emph{et al.} \cite{xu2019blockchain} propose a blockchain-based fair nonrepudiation network computing service 
provisioning scheme for IIoT, in which the blockchain is used as a service publication proxy and an evidence 
recorder. Their solutions overcome the drawbacks of traditional IIoT systems, where malicious services or 
clients may cheat others due to their own interests.

Moreover, there are a series of works study the blockchain from the aspect of auction or game theory.
Yao \emph{et al.} \cite{yao2019resource} model the resource management and pricing problem as a Stackelberg game,
and they design a multiagent reinforcement learning algorithm to search the near-optimal policy. 
Li \emph{et al.} \cite{li2017consortium} propose a secure energy trading system for industrial IoT, in which they use 
Stackelberg game theory design an optimal pricing strategy scheme for energy-coin loans to maximize the profits of 
credit banks.
Jiao \emph{et al.} \cite{jiao2019auction} propose an auction-based market model for the trading between the cloud computing 
service provider and miners. Their purpose is to efficiently allocate computing resources to maximize the social welfare.
Wang \emph{et al.} \cite{wang2017novel} propose a blockchain and double auction mechanism-based decentralized 
electricity transaction mode for microgrids, to achieve secure and quick electricity transactions.
Xiong \emph{et al.} \cite{xiong2018cloud} formulate the interaction between the cloud providers and miners as a Stackelberg
game, and apply backward induction to analyze the equilibria in each sub-game. 


\begin{figure} 
	\centering
	\includegraphics[width=.4\textwidth]{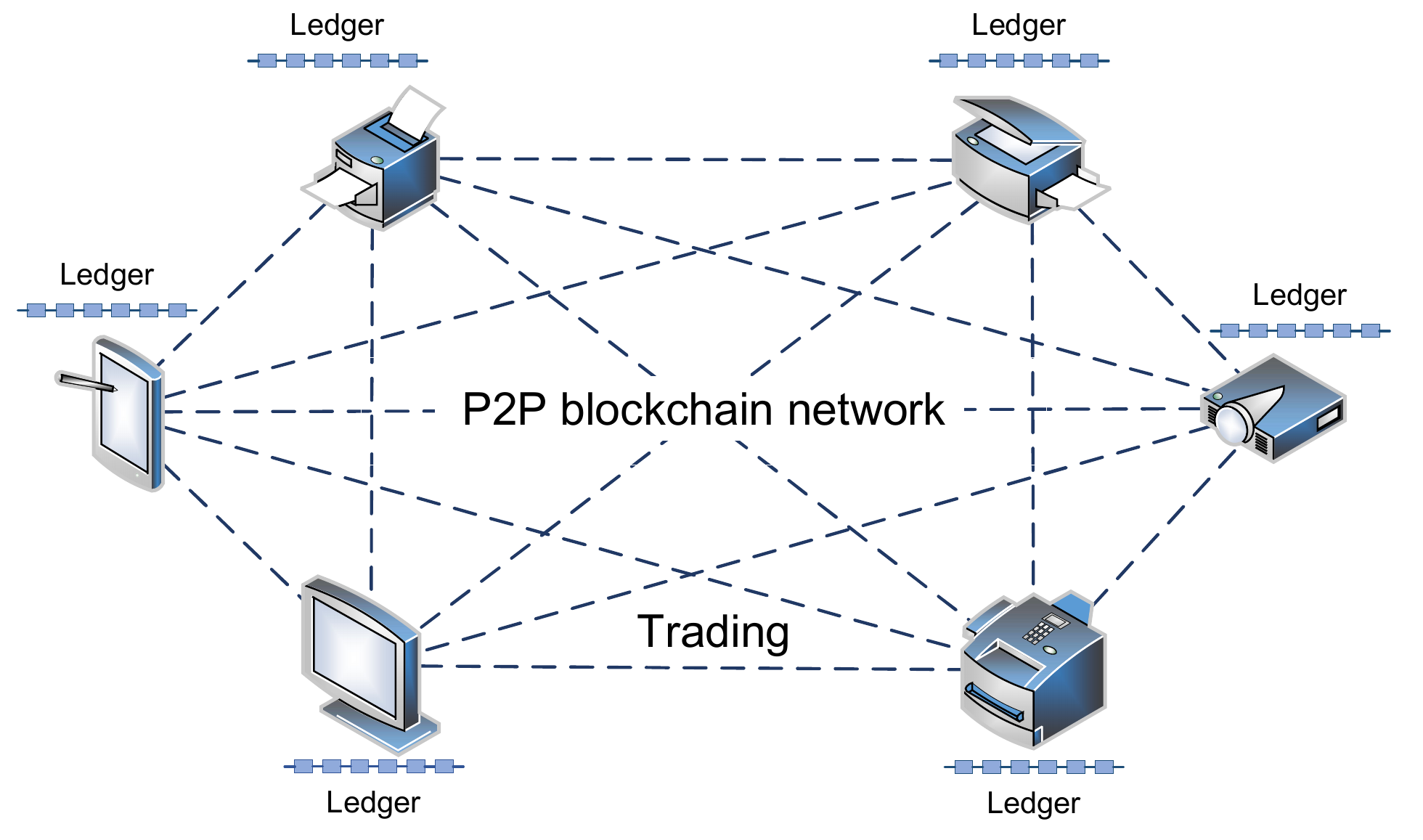}
	\caption{ The P2P blockchain network structure \cite{biswas2018scalable}.}
	\label{fig:p2pBlock}
\end{figure}

\begin{figure} 
	\centering
	\includegraphics[width=.4\textwidth]{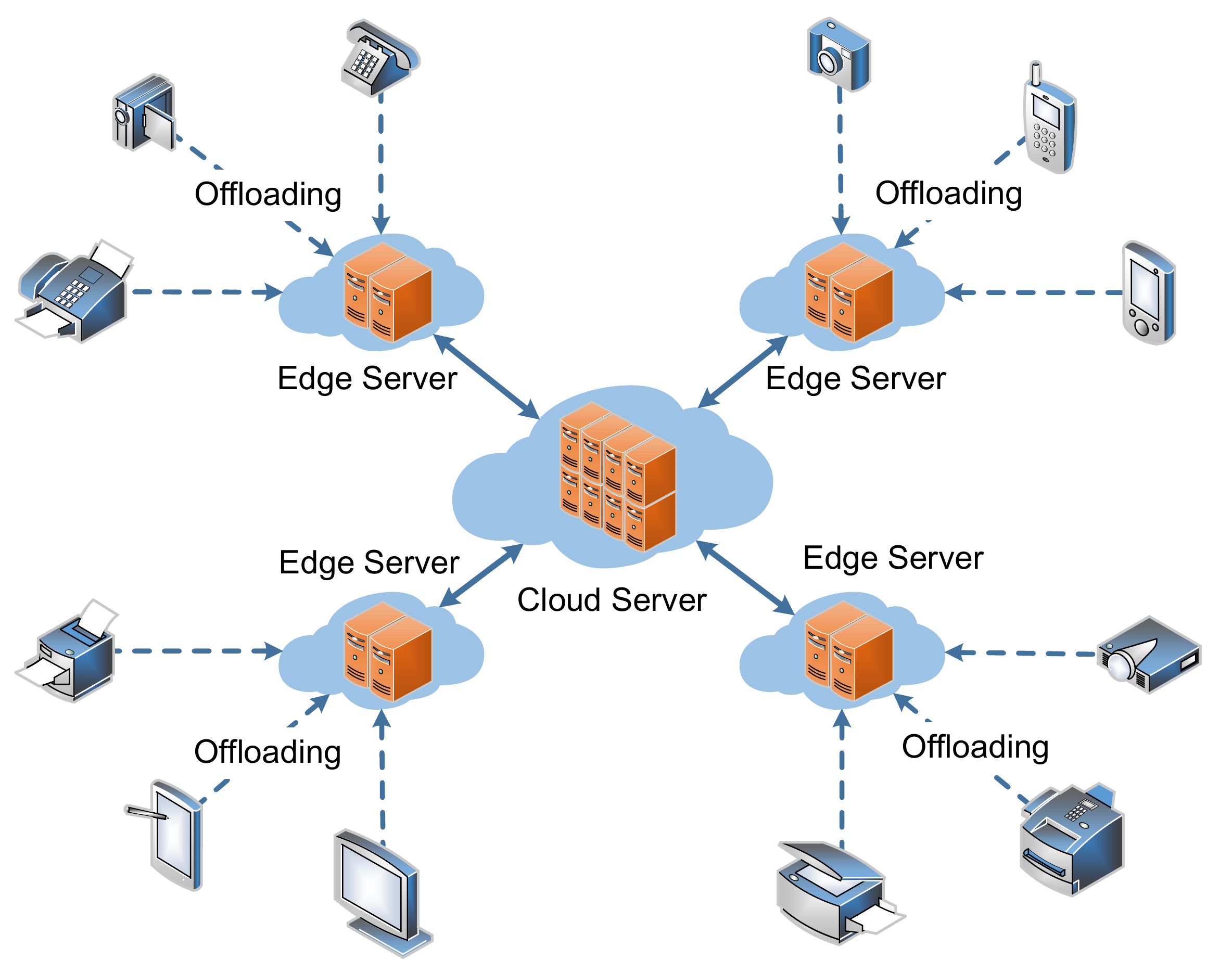}
	\caption{ The architecture of edge computing \cite{luo2020edge}.}
	\label{fig:EdgeComputing}
\end{figure}

\section{System model and problem formulation }\label{sec:Problem}
In this section, we first introduce the edge computing system of the IIoT about blockchain network,
we then analyze the security of the blockchain network, and finally, 
we formulate the incentive problem between blockchain platform and miners.
\subsection{System Model} \label{sec:Model}
In the blockchain network, the core problem is to achieve distributed consensus. Satoshi Nakamoto proposed the 
PoW consensus protocol in 2008 which is used for Bitcoin \cite{nakamoto2019bitcoin}. In the PoW consensus,
the users who want generate a new block need to solve a hash puzzle, which is very costly to be addressed but
easy for others to verify. This process is called mining, and these users are termed as miners. These miners compete
with each other to solve a hash puzzle, the one who first solve the hash puzzle has the right to generate a new 
block and will get a reward from the blockchain platform.

For the IIoT blockchain network, the lightweight IIoT devices cannot directly participate the mining process due to the 
limited computational capability. Incentivized by the reward from the blockchain platform, IIoT devices will purchase
computation resource from edge servers, each edge server offers its own unit price for computational resource. 
As shown in Fig. \ref{fig:p2pBlock}, the blockchain network is maintained by all of the IIoT devices. As for the mining process, 
each miner will offload its computational task to the edge server to compete with others, as shown in Fig. 
\ref{fig:EdgeComputing}. The probability of each miner
winning the competition depends on the amount of computational resource it purchased. All the miners purchase 
computational resource with the goal of maximizing their own profits. 

\subsection{Blockchain Security} \label{blockchain security}
Blockchain is a list of blocks that are linked by block hash value, each block record a set of transactions. 
More specifically, a block contains two parts: block content and block header. 
The block content is the details of transactions information, which records all the inputs and outputs of each transaction. 
The block header consists of the previous block hash value, which is used as a cryptographic link that creates the chain,
a version number that used for tracking for software or protocol updates, 
a timestamp that records the time at which the block is generated, a Merkle tree root of all the transactions, a
hash threshold value that records the current mining difficulty, and a nonce, which is used for solving the PoW puzzle.

The blockchain starts with a genesis block which is given by the blockchain platform, all subsequent blocks will put 
some previously generated block's hash value into their block header.
Miners compete with each other to 
solve a hash puzzle, the one who first solve the puzzle has the right to generate a new block, and new blocks will be added 
behind the genesis block. Forks may happen when multiple miners solve the hash puzzle at the same time, thus each user
maintains the blocks in the form of a block tree \cite{hari2019accel}. 
According to the longest chain principle 
\cite{nakamoto2019bitcoin}, each user will choose the longest branch in the block tree as the current blockchain, as shown
in the left part of Fig. \ref{fig:Attacker-blockchain}.

\begin{figure}
	\centering
	\includegraphics[width=.48\textwidth]{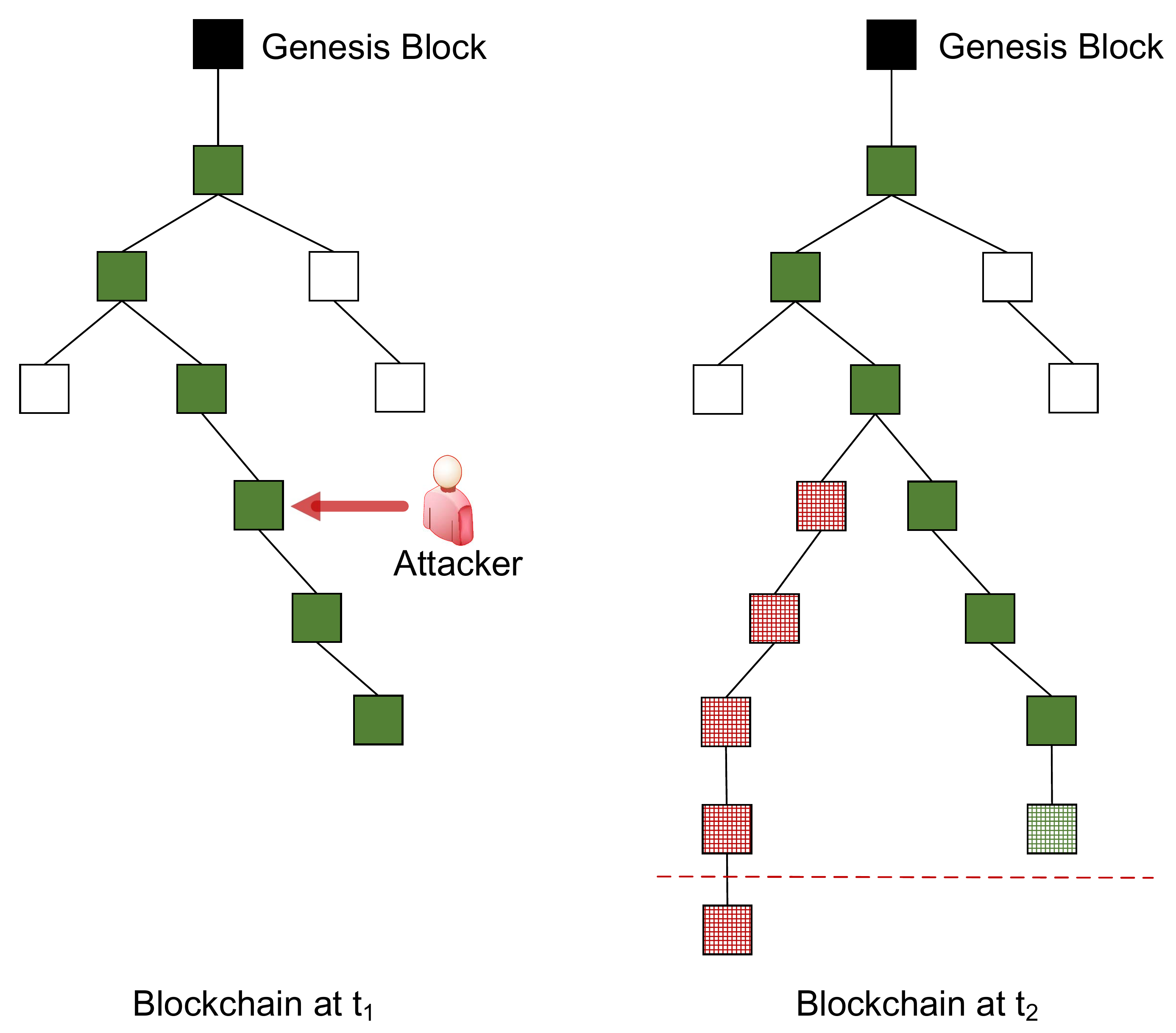}
	\caption{ Attackers tamper with the blockchain by winning the block mining race.}
	\label{fig:Attacker-blockchain}
\end{figure}

For a blockchain miner, to get the right to attach a new block to the chain, it needs to solve a hash function (PoW puzzle), that is, 
it needs to find a nonce and record it in the block header such that the hash value of the block header is less than 
the hash threshold.
As the hash function has no back door, the only way to find such a nonce is to run many hash operations. 
The difficulty of the PoW puzzle is determined by the given hash threshold value, which is a 256-bit binary number that starts 
with a certain number of consecutive zeros (difficulty). 
For example, if the hash threshold starts with 60 consecutive zeros, the probability of finding the correct nonce by 
performing a hash operation is $2^{-60}$, which means that it takes an average of $2^{60}$ hash operations to solve 
the PoW puzzle. To stabilize the growth rate of the blockchain, the platform will dynamically adjust the difficulty of the PoW puzzle  
based on the total computational power of the whole blockchain network. Take the bitcoin blockchain as an example, the
difficulty of PoW puzzle will be updated every 2 weeks to ensure that it takes 10 minutes (on average) to add a new 
block to the blockchain \cite{bitcoinDifficuty}.

Assume that an attacker wants to tamper with the context of a block, the change of the context will change the hash 
value of the block, so the attacker needs to find a new nonce to solve the PoW puzzle of this block. Moreover,
as each block contains the hash value of the previous block, the change of a block context will change all the subsequent 
blocks, so the attacker should find the nonce for every subsequent block.
In fact, the attacker needs to fork a new branch, and start a block mining race against other miners of the network.
The attacker successfully tamper with the blockchain once the attacker wins the race, as the new branch forked by the attacker
becomes the longest 
one in the block tree. As shown in Fig. \ref{fig:Attacker-blockchain}.

Now we analyze the probability of success of an attacker. Assume that the total computational power of the blockchain network
is $H$ by the hash rate, and the attacker's computation power is $h$. Then the probability an honest miner finds the next block 
can be denoted by $p = \frac{H-h}{H}$, and the probability that the attacker finds the next block is $q = \frac{h}{H}$.
According to \cite{grunspan2018double}, the probability that the attacker will ever win the race from $z$ blocks behind can
be calculated by

\begin{gather}
	P(z) = 
	\begin{cases}
		1, &\text{if } q \geq \frac{1}{2}, \\
		I_{4pq}\left(z, \frac{1}{2}\right), &\text{otherwise,}
	\end{cases}
\end{gather}
where $I_{x}(u,v)$ is the regularized incomplete beta function:
\begin{gather}
	I_{x}(u,v) = \frac{\Gamma(u+v)}{\Gamma(u)\Gamma(v)} \int_{0}^{x} t^{u-1} (1-t)^{v-1} \, dt,
\end{gather}
and $\Gamma(\cdot)$ is the gamma function. 

\begin{figure}
	\centering
	\includegraphics[width=.4\textwidth]{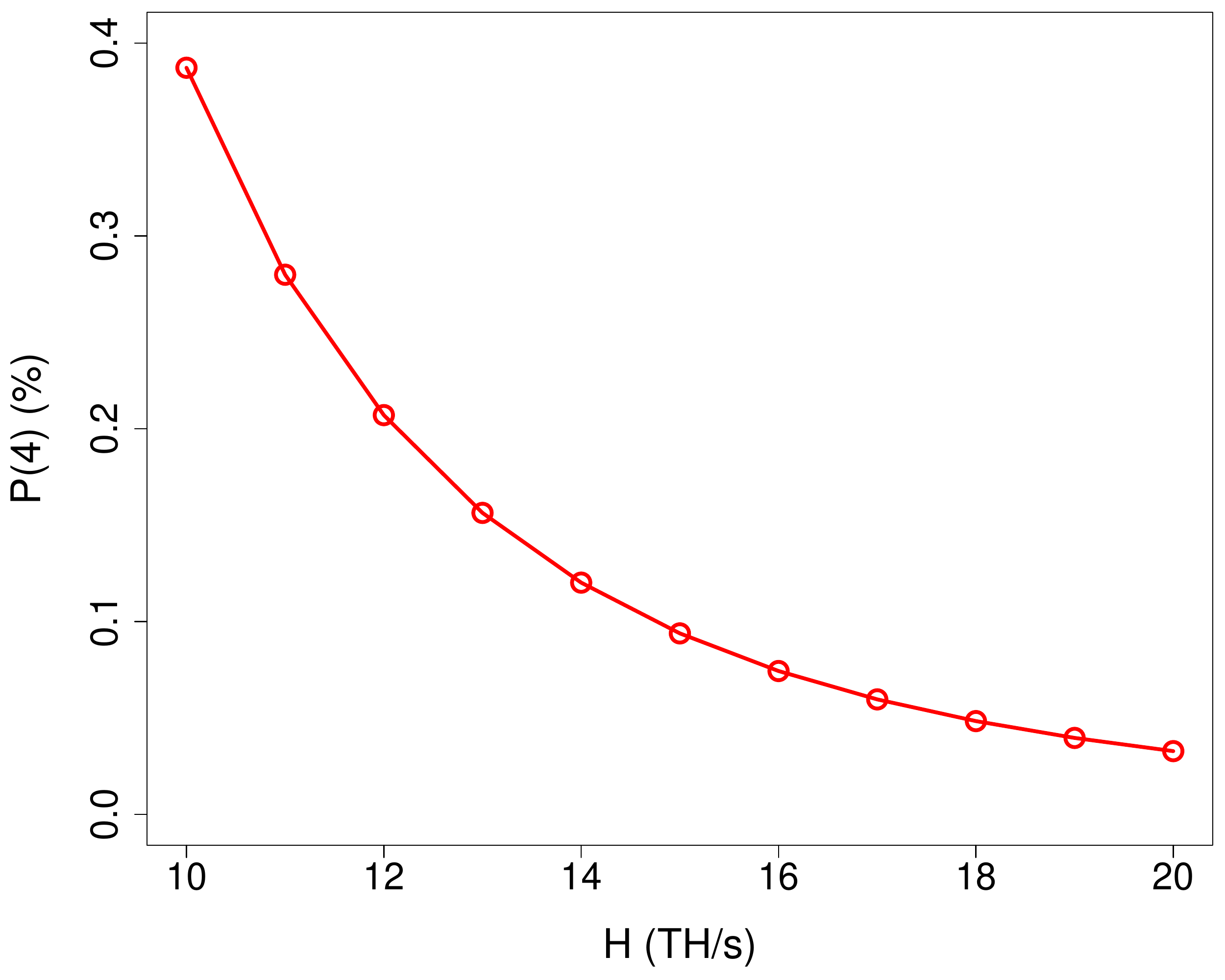}
	\caption{ The probability that an attacker will ever win the race from 4 blocks behind.}
	\label{fig:attackerPro}
\end{figure}

Consider that an attacker's computational power is $h = 1 ~TH/s$, and he wants to tamper with the context of the $4th$ block 
from the blockchain tip. Fig. \ref{fig:attackerPro} shows the probability of success of the attacker against the total computational
power of the network. It can be seen that as the total computational power of the network increases, the probability that 
the attacker successfully tamper with the blockchain significantly decreases.
Specifically, when $H = 10~TH/s$, $P(4) = 0.387\%$; when $H = 20~TH/s$, $P(4) = 0.033\%$.
The probability of success of the attacker reduced by more than 10 times if we double the computational power of the entire
network.
The result suggests that the larger the computational power of the whole blockchain network, 
the harder it is for the attacker to tamper with the blockchain, and thus the more secure the blockchain network will be.

\subsection{Problem Description}
As described before, the blockchain platform provides certain rewards to incentivize miners to participate in the mining process,
and each miner purchase computational resource from its nearby edge server for getting more profits. 

For the blockchain platform, the revenue is mainly derived from the transaction fees, assume that the average transaction
fees in a block is $B$. The platform will give a reward $R$ to the miner who packaging a new block. Note that the reward
$R$ should be no larger than $B$, otherwise the blockchain cannot maintain perpetual operation. As discussed in Section
\ref{blockchain security}, the total computational power will significantly affect the blockchain security, so the blockchain
platform will get benefit from the huge computational power provided by the miners.
As shown in Fig .\ref{fig:attackerPro}, the decrease speed of the probability that an attacker successfully tamper with the 
blockchain slows down as the 
total computational power of the entire network increases. Thus we use the \emph{sigmoid function} to describe the benefits of the 
total computational power of the entire network.
We define the utility of the blockchain platform as 
\begin{gather} \label{equ:utility1}
	U = \alpha \cdot \left [\sigma \left(\beta \cdot \sum\nolimits_{s_i\in S} \mu_i\right) - \frac{1}{2} \right]  - R,
\end{gather}
where $S$ is the miners set, $\mu_i$ is the computational power provided by the $i$-$th$ miner, 
$\sigma(\cdot)$ is the \emph{sigmoid function} that defined as $\sigma(x) = \frac{1}{1 + e^{-x}}$,
$\alpha > 1$ is a constant parameter that controls the importance of the total computational power,
and $0 < \beta < 1$ is a constant parameter that controls the convergence rate of the sigmoid function.

For the miners, consider there are a set $S$ of IIoT devices that are interested in participating in the mining process, where
$S = \{s_1, s_2, \dots, s_n\}$. Each miner $s_i \in S$ will purchase $\mu_i$ computational power from edge servers to 
compete with others in pursuit of a maximum financial profit. 
According to the principle of the PoW consensus protocol, the speed of a miner
for solving the hash puzzle depends on the number of hash operations it can perform per unit of time, that is, its computational
power. Therefore, the miner who purchases the most computational power is most likely to solve the PoW puzzle first.
We use $p_i$ to denote the probability that miner $s_i$ winning the competition among all miners in solving the PoW puzzle,
$p_i$ can be defined as
\begin{gather}
	p_i = \frac{\mu_i}{\sum\nolimits_{s_j \in S} \mu_j}.
\end{gather}

The unit price of the computational power purchased by miners may be different, as they purchase computational power
from different edge servers.
Assume that the unit price of the computational power purchased by $s_i$ is $\lambda_i$ per day.
We also assume that the blockchain will generate an average of $N$ new blocks per day.
Then the expected reward of miner $s_i$ in a day is $p_i R N$, and its cost is $\lambda_i \mu_i$.
We use $P_i$ to represent the expected profit of miner $s_i$ in one day, $P_i$ can be calculated as follows,
\begin{gather} \label{equ: profit_miner}
	P_i = p_i R N - \lambda_i \mu_i. 
\end{gather}

Both blockchain platform and miners will dynamically adjust their strategies to get the maximum profit.
We model the interaction between blockchain platform and miners as a two-stage Stackelberg game. In the upper stage, the
blockchain platform sets the reward to incentivize miners participate the mining process. In the lower stage, miners decide
the optimal amount of computational power they purchase. We formulate the optimization problems for the blockchain
platform and miners as follows.

We first introduce the lower stage of the game. Given the reward $R$ of the blockchain platform and other miners' strategies
$\bm{\mu_{-i}}$, where $\bm{\mu_{-i}} = \{\mu_1, \mu_2, \dots, \mu_{i-1}, \mu_{i+1}, \dots, \mu_n\}$.
The miner $s_i$ decides the amount of computational power $\mu_i$ it purchased to maximize its own profit.
This sub-game problem can be written as follows.
\begin{problem}
	miners' sub-game.
	\begin{align*}
   		 &\max\limits_{\mu_i} ~~~ P_i(\mu_i | \bm{\mu_{-i}}, R)\\
    		&~ s.t.  ~~~ \mu_i \geq 0
	\end{align*}
\end{problem}

For the upper stage of the game, the blockchain platform will dynamically adjust the reward $R$ to maximize its utility.
As defined in equation (\ref{equ:utility1}), the utility of the blockchain platform is directly related to the strategies $\bm{\mu}$
of miners, where $\bm{\mu} = \{\mu_1, \mu_2, \dots, \mu_n\}$, 
and the reward $R$. 
This sub-game can be formulated as follows.
\begin{problem}
	blockchain platform's sub-game.
	\begin{align*}
   		 &\max\limits_{R} ~~~ U(R, \bm{\mu})\\
    		& ~s.t. ~~~ 0 \leq R \leq B
	\end{align*}
\end{problem}

Problem 1 and Problem 2 together form a Stackelberg game. The object of the game is to find a \emph{Stackelberg equilibrium} 
point where neither the leader (blockchain platform) nor the followers (miners) want to change their strategies.
In this paper, the Stackelberg equilibrium can be defined as follows.
\begin{define}
    Let $\bm{\mu^*}$ and $R^*$ be the optimal strategies of miners and blockchain platform, respectively, where
    $\bm{\mu^*} = \{\mu_1^*, \mu_2^*, \dots, \mu_n^*\}$. 
    Then, the point ($\bm{\mu^*}$, $R^*$) is the Stackelberg equilibrium point if it satisfies the following two conditions,
     \begin{gather}
    	U(R^*, \bm{\mu^*}) \geq U(R, \bm{\mu^*}), \forall ~0 \leq R \leq B,
    \end{gather}
    and
    \begin{gather}
    	P_i(\mu_i^* | \bm{\mu_{-i}^*}, R^*) \geq P_i(\mu_i | \bm{\mu_{-i}^*}, R^*), \forall i, \forall \mu_i \geq 0,
    \end{gather}
    where $\bm{\mu_{-i}^*} = \bm{\mu^*} \backslash \{\mu_i^*\}$.
\end{define}


\section{Game analysis for the incentive mechanism}\label{sec:Game analysis}
In this section, we analyze the existence and the uniqueness of the Stackelberg equilibrium of our proposed Stackelberg
game. We first analyze the lower stage of the game, in which we aim to find the \emph{Nash equilibrium} for the miners'
sub-game. Based on the analysis of the lower stage, we then analyze the utility maximization of the blockchain platform’s 
sub-game in the upper stage.

\subsection{Analysis of the miners's sub-game} \label{subsec:miners}
After the blockchain platform set the reward $R$ for miners, all of the miner will dynamically adjust their strategies to get the
maximum profits until reach a Nash equilibrium. In the following, we will prove that the Nash equilibrium point exists in the 
miners' sub-game through a theorem.
\begin{theorem}
	The Nash equilibrium point exists in the miners' sub-game.
\end{theorem}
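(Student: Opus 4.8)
The plan is to establish existence of a Nash equilibrium by invoking the standard Debreu--Glicksberg--Fan existence theorem for concave games, which guarantees a pure-strategy Nash equilibrium whenever each player's strategy set is a nonempty, compact, convex subset of a Euclidean space and each player's payoff is continuous in the full strategy profile and concave in that player's own strategy. Accordingly, I would verify these hypotheses one by one for the miners' sub-game defined in Problem 1.

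First I would address the strategy space. As written, each miner's feasible set is $\mu_i \geq 0$, which is convex but not compact. To apply the existence theorem cleanly, I would argue that we may restrict each $\mu_i$ to a compact interval $[0, \mu_{\max}]$ without loss of generality: since the cost term $\lambda_i \mu_i$ grows linearly while the reward term $p_i R N$ is bounded above by $RN$, no miner will ever find it profitable to purchase an unboundedly large amount of computational power, so the optimal response always lies in a bounded region. This makes each strategy set a nonempty, compact, convex subset of $\mathbb{R}$, satisfying the first requirement.

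Next I would check concavity of $P_i$ in $\mu_i$, holding $\bm{\mu_{-i}}$ and $R$ fixed. Writing $M_{-i} = \sum_{s_j \neq s_i} \mu_j$, the profit is
\begin{gather}
	P_i = \frac{\mu_i}{\mu_i + M_{-i}} R N - \lambda_i \mu_i.
\end{gather}
The key step is to compute the second derivative with respect to $\mu_i$ and show it is negative. Differentiating once gives $\frac{\partial P_i}{\partial \mu_i} = \frac{M_{-i}}{(\mu_i + M_{-i})^2} R N - \lambda_i$, and differentiating again yields $\frac{\partial^2 P_i}{\partial \mu_i^2} = -\frac{2 M_{-i}}{(\mu_i + M_{-i})^3} R N$, which is strictly negative whenever $M_{-i} > 0$ and $R N > 0$. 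Hence $P_i$ is (strictly) concave in $\mu_i$ on the relevant region, and continuity of $P_i$ in the full profile $\bm{\mu}$ is immediate away from the origin. With all hypotheses verified, the existence theorem delivers a Nash equilibrium.

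The main obstacle I anticipate is the behavior at the boundary point where $\bm{\mu} = \bm{0}$, since $p_i = \mu_i / \sum_j \mu_j$ is undefined there and the second-derivative argument requires $M_{-i} > 0$. I would handle this carefully by observing that the all-zero profile cannot be an equilibrium (a single miner deviating to a small positive $\mu_i$ captures essentially the entire reward $RN$, so deviating is strictly profitable), which lets me confine attention to the interior where the concavity computation is valid. A secondary point worth addressing is the compactification argument above, which I would make rigorous by exhibiting an explicit upper bound $\mu_{\max}$ beyond which the first derivative is negative for every player, guaranteeing that truncating the strategy space does not discard any best response.
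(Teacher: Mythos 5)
Your proposal follows essentially the same route as the paper's proof: restrict each miner's strategy to a nonempty, compact, convex interval (the paper uses $[0, RN/\lambda_i]$, justified by the same observation that larger purchases yield negative profit), compute $\frac{\partial^2 P_i}{\partial \mu_i^2} \leq 0$ to establish concavity of $P_i$ in $\mu_i$, and invoke the standard existence theorem for concave games. The only differences are that you name the Debreu--Glicksberg--Fan theorem explicitly and flag the discontinuity of $p_i$ at $\bm{\mu} = \bm{0}$ (which the paper's proof silently ignores), so your version is, if anything, slightly more careful than the original.
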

\begin{proof}
	For the miners' sub-game, the object function $P_i(\cdot)$ is defined in $[0, \infty)$. From equation (\ref{equ: profit_miner}),
	we can know that $\mu_i \leq \frac{RN}{\lambda_i}$, otherwise, the profit of miner $s_i$ will be negative. Thus $\mu_i$ 
	is continuously chosen in $[0, \frac{RN}{\lambda_i}]$, which is a non-empty, convex and compact subset of the Euclidean
	space. 
	Next, we calculate the first order and second order derivatives of function $P_i(\cdot)$.
	\begin{gather}
		\frac{\partial P_i}{\partial \mu_i} =  RN\frac{\sum\nolimits_{s_j \in S} \mu_j - \mu_i} {(\sum\nolimits_{s_j\in S} \mu_j)^2}
		- \lambda_i,
	\end{gather}
	\begin{gather} \label{equ7:concave}
		\frac{\partial^2 P_i}{\partial \mu_i ^2} = \frac{\partial (\frac{\partial P_i}{\partial \mu_i})} {\partial \mu_i}
		= -2RN \frac{\sum\nolimits_{s_j \in S} \mu_j - \mu_i}{(\sum\nolimits_{s_j\in S} \mu_j)^3} \leq 0.
	\end{gather}
	
	Therefore, $P_i(\cdot)$ is a strictly concave function, and we then conclude that the Nash equilibrium point exists in the 
	miners' sub-game.
\end{proof}

As $P_i(\cdot)$ is a strictly concave function with $\mu_i$, given the reward $R$ of the blockchain platform and other miners' 
strategies $\bm{\mu_{-i}}$, miner $s_i$ has a unique best strategy $u_i$, and it can be achieved when the first order
derivative of $P_i(\cdot)$ equals to 0, i.e.,
\begin{gather} \label{equ: firstorder-u}
	\frac{\partial P_i}{\partial \mu_i} =  RN\frac{\sum\nolimits_{s_j \in S} \mu_j - \mu_i} {(\sum\nolimits_{s_j\in S} \mu_j)^2}
		- \lambda_i = 0.
\end{gather}

Solving equation (\ref{equ: firstorder-u}), we have 
$
\mu_i = \sqrt{\frac{RN\sum\nolimits_{s_j \in S\backslash \{s_i\}}\mu_j}{\lambda_i}} - \sum\limits_{s_j \in S\backslash \{s_i\}}\mu_j.
$
As each strategy $\mu_i$ for $s_i$ is an nonnegative number, the best strategy $\mu_i^*$ for $s_i$ is given as follows.
\begin{gather} \label{equ:beststrategy}
	\mu_i^* = 
	\begin{cases}
    		\displaystyle 0, ~~~~~~~~~~~~~~~~~~~~\text{if }  \frac{RN} {\sum\nolimits_{s_j\in S\backslash \{s_i\}} \mu_j} \leq \lambda_i,\\
		\sqrt{\frac{RN \cdot \sum\limits_{s_j \in S\backslash \{s_i\}}\mu_j}{\lambda_i}} - \sum\limits_{s_j \in S\backslash \{s_i\}}\mu_j, \text{otherwise}
	\end{cases}
\end{gather}

\begin{corollary} \label{corollary1}
    	Given the optimal strategies of miners $\bm{\mu^*} = \{\mu_1^*, \mu_2^*, \dots, \mu_n^*\}$, for any $\mu_i^*, \mu_j^*
	\in \bm{\mu^*}$, if $\lambda_i \leq \lambda_j$, then $\mu_i^* \geq \mu_j^*$.
\end{corollary}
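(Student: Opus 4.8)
The plan is to use the first-order characterization of the equilibrium to write each active miner's purchase as an explicitly decreasing function of its own unit price, exploiting the fact that the equilibrium total power is common to every miner. Throughout, let $M^* = \sum_{s_k \in S}\mu_k^*$ denote the total computational power at the Nash equilibrium; I may assume $M^* > 0$, since if every $\mu_k^* = 0$ the claim holds trivially.

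First I would rewrite the stationarity condition. For any miner $s_i$ that is active at equilibrium (i.e.\ $\mu_i^* > 0$), the first-order condition (\ref{equ: firstorder-u}) holds with $\sum_{s_j \in S}\mu_j = M^*$, and solving for $\mu_i^*$ gives the closed form
\[
\mu_i^* = M^* - \frac{\lambda_i (M^*)^2}{RN}.
\]
The crucial point is that $M^*$ is the \emph{same} scalar for every miner, so this displays $\mu_i^*$ as a strictly decreasing affine function of $\lambda_i$ whose coefficients do not depend on $i$. Rewriting the condition in terms of $M^*$, rather than the individual residual $\sum_{s_j \neq s_i}\mu_j$ that appears in (\ref{equ:beststrategy}) and differs from miner to miner, is exactly what isolates $\lambda_i$ as the only quantity varying across miners.

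Next I would extract a threshold for inactive miners. For any inactive miner ($\mu_i^* = 0$), the first branch of (\ref{equ:beststrategy}) together with $\sum_{s_j \neq s_i}\mu_j^* = M^*$ gives $\lambda_i \geq RN/M^*$; conversely, the closed form shows that every active miner $s_j$ satisfies $\lambda_j = RN(M^* - \mu_j^*)/(M^*)^2 < RN/M^*$, since $\mu_j^* > 0$. Thus activity is governed by whether $\lambda$ lies below or above the common threshold $RN/M^*$.

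The conclusion then follows by a short case analysis on whether $s_i$ and $s_j$ are active. If both are active, subtracting the two closed forms yields $\mu_i^* - \mu_j^* = \frac{(M^*)^2}{RN}(\lambda_j - \lambda_i) \geq 0$. If both are inactive, $\mu_i^* = \mu_j^* = 0$. If $s_i$ is active and $s_j$ inactive, then $\mu_i^* > 0 = \mu_j^*$. The only delicate configuration, and the main obstacle, is ruling out a cheaper miner being idle while a costlier one mines: if $s_i$ were inactive and $s_j$ active, the two threshold facts would force $\lambda_j < RN/M^* \leq \lambda_i$, contradicting the hypothesis $\lambda_i \leq \lambda_j$. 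Hence this case cannot arise, and in every admissible case $\mu_i^* \geq \mu_j^*$, as claimed.
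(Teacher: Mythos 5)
Your proof is correct. It rests on the same two facts as the paper's proof --- the first-order condition (\ref{equ: firstorder-u}) for active miners and the branch condition in (\ref{equ:beststrategy}) for inactive ones --- but it organizes them differently. The paper argues by contradiction: assuming $\lambda_i \leq \lambda_j$ but $\mu_j^* > \mu_i^*$, it compares the two marginal profits evaluated at the common equilibrium total and finds that $\frac{\partial P_j}{\partial \mu_j}(\mu_j^*|\bm{\mu_{-j}^*})$ is strictly smaller than $\frac{\partial P_i}{\partial \mu_i}(\mu_i^*|\bm{\mu_{-i}^*}) \leq 0$, contradicting stationarity of the active miner $s_j$. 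You instead proceed directly: by rewriting the stationarity condition in terms of the total $M^*$ you obtain the closed form $\mu_i^* = M^* - \lambda_i (M^*)^2/(RN)$, which is affine and decreasing in $\lambda_i$ with coefficients common to all miners, and you pair it with the activity threshold $RN/M^*$ to dispose of the mixed active/inactive cases. The paper's contradiction is shorter and needs no case enumeration; your version is more informative --- the closed form and the threshold essentially anticipate the paper's own equations (\ref{equ:sum-u})--(\ref{equ:beststra-u}) and the stopping rule in Algorithm \ref{algorithm:miner}, so your argument doubles as a derivation of the structure the paper only establishes later. One small point of care: your dismissal of the degenerate case $M^* = 0$ is acceptable as stated, and is in fact ruled out at equilibrium by the paper's Corollary \ref{corollary2}.
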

\begin{proof}
	We prove this corollary by contradiction. Suppose that there exist two miners' strategies
	 $\mu_i^*, \mu_j^* \in \bm{\mu^*}$, where 	$\lambda_i \leq \lambda_j$ and $\mu_j^* > \mu_i^* \geq 0$.
	 As $\mu_j^* > 0$, according to equations (\ref{equ: firstorder-u}) and (\ref{equ:beststrategy}),  we can easily know that
	 $\frac{\partial P_j}{\partial \mu_j}(\mu_j^*|\bm{\mu_{-j}^*}) = 0$.
	 
	 Similarly, If $\mu_i^* > 0$, we have $\frac{\partial P_i}{\partial \mu_i}(\mu_i^*|\bm{\mu_{-i}^*}) = 0$.
	 If $\mu_i^* = 0$, according to equation (\ref{equ:beststrategy}), we have 
	 $\frac{RN} {\sum\nolimits_{s_j\in S\backslash \{s_i\}} \mu_j} \leq \lambda_i$, substituting it in to equation 
	 (\ref{equ: firstorder-u}), we have $\frac{\partial P_i}{\partial \mu_i}(\mu_i^*|\bm{\mu_{-i}^*}) \leq 0$.
	 In summary, we know that $\frac{\partial P_i}{\partial \mu_i}(\mu_i^*|\bm{\mu_{-i}^*}) \leq 0$.
	 
	 As $\lambda_i \leq \lambda_j$ and $\mu_j^* > \mu_i^* \geq 0$, we have
	  \begin{gather}
	  \begin{split}
	 	\frac{\partial P_j}{\partial \mu_j}(\mu_j^*|\bm{\mu_{-j}^*}) &= 
		RN\frac{\sum\nolimits_{s_k \in S} \mu_k^* - \mu_j^*} {(\sum\nolimits_{s_k\in S} \mu_k^*)^2} - \lambda_j 
		\\
	       &< RN\frac{\sum\nolimits_{s_k \in S} \mu_k^* - \mu_i^*} {(\sum\nolimits_{s_k\in S} \mu_k^*)^2} - \lambda_i \\
		&= \frac{\partial P_i}{\partial \mu_i}(\mu_i^*|\bm{\mu_{-i}^*}) \leq 0,
	\end{split}
	 \end{gather}
	 which contradicts $\frac{\partial P_j}{\partial \mu_j}(\mu_j^*|\bm{\mu_{-i}^*}) = 0$.
	 Therefore, the corollary holds.
\end{proof}

Sort the miners in ascending order of $\lambda_i$, for clarity, miners are renumbered and still denoted by 
$S = \{s_1, s_2, \cdots, s_n\}$.
According to Corollary \ref{corollary1}, we have $\mu_1 \geq \mu_2 \geq \cdots \geq \mu_n \geq 0$.

We assume that the the first $q$ miners have a non-zero strategy, i.e., $\mu_q > 0$, $\mu_{q+1} = 0$.
We let $S_q = \{s_1, s_2, \dots, s_q\}$. It's obvious that $\sum\nolimits_{s_j \in S} \mu_j = \sum\nolimits_{s_j \in S_q} \mu_j$,
and we have the following corollary.

\begin{corollary} \label{corollary2}
	Let $q$ be the number of miners that have a non-zero strategy in a Nash equilibrium, then we have $q \geq 2$.
\end{corollary}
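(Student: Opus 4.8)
The plan is to argue by contradiction: I assume that a Nash equilibrium exists in which at most one miner is active, i.e.\ $q \le 1$, and then exhibit a profitable unilateral deviation, contradicting the equilibrium conditions. Recall that after the reindexing justified by Corollary \ref{corollary1} we have $\mu_1 \ge \mu_2 \ge \cdots \ge \mu_n \ge 0$, so the assumption $q \le 1$ forces $\mu_j = 0$ for every $j \ge 2$; only $s_1$ can possibly carry positive power. This reduces the whole argument to examining the incentives of $s_1$.

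The key structural observation is that a lone active miner always wins the race. If $s_1$ is the only miner with $\mu_1 > 0$, then $\sum_{s_j \in S}\mu_j = \mu_1$, so $p_1 = 1$ and, by (\ref{equ: profit_miner}), its profit collapses to $P_1 = RN - \lambda_1 \mu_1$, which is \emph{strictly decreasing} in $\mu_1$. First I would treat the case $q = 1$: here $\mu_1 > 0$ is supposed to be a best response to $\bm{\mu_{-1}} = \bm{0}$, yet halving the power to $\mu_1/2$ yields $P_1 = RN - \lambda_1 \mu_1/2 > RN - \lambda_1 \mu_1$, a strictly larger profit; hence $\mu_1$ cannot be a best response and the profile is not a Nash equilibrium. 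Next I would dispose of $q = 0$: with every $\mu_j = 0$ the miner $s_1$ earns $P_1 = 0$, but deviating to any $\mu_1 = \epsilon$ with $0 < \epsilon < RN/\lambda_1$ makes $s_1$ the sole active miner and gives $P_1 = RN - \lambda_1 \epsilon > 0$, again a profitable deviation. Both cases contradict the equilibrium conditions of Definition 1, so we conclude $q \ge 2$.

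The step I expect to require the most care is precisely the boundary behaviour hidden behind these deviations. The derivatives in (\ref{equ7:concave}) and the closed-form best response in (\ref{equ:beststrategy}) were all obtained under the tacit assumption that the aggregate power $\sum_{s_j\in S}\mu_j$ is strictly positive; when the total vanishes (the $q=0$ case) or when only a single miner contributes, the win probability $p_i = \mu_i / \sum_{s_j\in S}\mu_j$ degenerates to the indeterminate form $0/0$ and $P_i$ becomes discontinuous at the all-zero profile. I would therefore avoid invoking (\ref{equ:beststrategy}) at this boundary and instead reason directly from the definition of $P_i$, evaluating $p_1 = 1$ for the lone active miner and $p_1 = 0$ at the all-zero profile, so that the monotonicity of $RN - \lambda_1 \mu_1$ carries the contradiction cleanly. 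As supporting intuition I would note that, writing $T = \sum_{s_j \in S \setminus \{s_1\}}\mu_j$, the best response in (\ref{equ:beststrategy}) satisfies $\lim_{T \to 0^+}\mu_1^* = 0$: as the competitors' aggregate power shrinks, the best response of the remaining miner shrinks with it, which explains structurally why a single-miner configuration cannot sustain itself as an equilibrium.
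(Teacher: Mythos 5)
Your proof is correct and follows essentially the same route as the paper's: ruling out $q=0$ by exhibiting a small positive purchase that yields positive profit, and ruling out $q=1$ by noting the lone miner's profit $RN - \lambda_1\mu_1$ strictly increases as $\mu_1$ decreases. Your version is in fact slightly more careful than the paper's (a concrete deviation to $\mu_1/2$ rather than ``continuously reducing $\mu_1$ to $0$'', plus explicit handling of the degenerate $0/0$ win probability at the all-zero profile), but the underlying argument is identical.
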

\begin{proof}
	Firstly, it's clear that $q > 0$, otherwise, according to equation (\ref{equ: profit_miner}), any miner $s_i$ who purchase 
	computational power with the amount in $(0, \frac{RN}{\lambda_i})$ can get more profit.
	Then we consider the case that $q = 1$. Let $s_1$ be the miner who has a positive strategy, and its current best 
	strategy is to purchase $\mu_1$ amount of computational power. According to equation (\ref{equ: profit_miner}), 
	$s_1$'s current profit is $RN - \lambda_1 \mu_1$.
	However, $s_1$ can increase its profit by continuously reducing $\mu_1$ to 0, which indicates that miners didn't reach a 
	Nash equilibrium point.
	Therefore, the corollary holds.
\end{proof}

Summing up equation (\ref{equ: firstorder-u}) with $i = 1, 2, \dots, q$, we have
\begin{gather} \label{equ11}
	\frac{RN(q-1)}{\sum\nolimits_{s_j \in S_q} \mu_j} - \sum\nolimits_{s_i \in S_q}\lambda_i = 0.
\end{gather}
Thus we have
\begin{gather} \label{equ:sum-u}
	\sum\nolimits_{s_j \in S_q} \mu_j = \frac{RN(q-1)}{\sum\nolimits_{s_i \in S_q}\lambda_i}.
\end{gather}
Substituting equation (\ref{equ:sum-u}) into equation (\ref{equ: firstorder-u}), we obtain
\begin{gather}\label{equ:beststra-u}
	\mu_i = \frac{RN(q-1)}{\sum\nolimits_{s_j \in S_q}\lambda_j} \left ( 1 - \frac{(q-1)\lambda_i}{\sum\nolimits_{s_j \in S_q}\lambda_j} \right).
\end{gather}

As $\mu_q > 0$, and we have proven that $q \geq 2$, we then have $ 1 - \frac{(q-1)\lambda_q}{\sum\nolimits_{s_j \in S_q}
\lambda_j} > 0$, i.e., $\lambda_q < \frac{\sum\nolimits_{s_j \in S_{q-1}}\lambda_j}{q-2}$. 

Based on the above analysis, we design the following algorithm to find the Nash equilibrium point for the miners' sub-game.

\begin{algorithm} 
	\caption{Calculate Nash equilibrium for miners.}
	\label{algorithm:miner}
	\begin{algorithmic}[1]
		\State Sort miners in ascending order of $\lambda_i$ and renumber miners, i.e., 
		$\lambda_1 \leq \lambda_2 \leq \dots \leq \lambda_n$.
		\State $S' = \{s_1, s_2\}$, q = 2
		\While {$q < n$ and $\lambda_{q+1} < \frac{\sum\nolimits_{s_i \in S'}\lambda_i}{|S'| - 1}$}
			\State $S' \leftarrow S' \cup \{s_{q+1}\}$ ;
			\State $q \leftarrow q+1$;
		\EndWhile
		\For{$i=1$; $i\leq n$; $i++$}
			\If{$s_i \in S'$}
				\State $\mu_i^* = \frac{RN(q-1)}{\sum\nolimits_{s_j \in S'}\lambda_j} \left ( 1 - \frac{(q-1)\lambda_i}{\sum\nolimits_{s_j \in S'}\lambda_j} \right)$;
			\Else 
				\State $\mu_i^* = 0$;
			\EndIf
		\EndFor \\
		\Return $\bm{\mu^*} = \{\mu_1^*, \mu_2^*, \dots, \mu_n^*\}$
	\end{algorithmic}
\end{algorithm}

In the following, we first prove the strategies produced by Algorithm \ref{algorithm:miner} is a Nash equilibrium for the miners'
sub-game, then we prove that the Nash equilibrium is unique.

\begin{theorem}
	The  strategies produced by Algorithm \ref{algorithm:miner} is a Nash equilibrium for the miners' sub-game.
\end{theorem}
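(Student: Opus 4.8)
The plan is to use the strict concavity established in Theorem 1 to reduce the Nash-equilibrium condition to a coordinate-wise best-response check, and then verify that the output of Algorithm \ref{algorithm:miner} meets this check for every miner. Concretely, since each $P_i(\cdot)$ is strictly concave in $\mu_i$ by equation (\ref{equ7:concave}), the profile $\bm{\mu^*}$ is a Nash equilibrium if and only if every $\mu_i^*$ coincides with the unique best response characterized by equation (\ref{equ:beststrategy}): an active miner must satisfy the first-order condition (\ref{equ: firstorder-u}), and an inactive miner must satisfy $\frac{RN}{\sum_{s_j \in S\setminus\{s_i\}}\mu_j^*}\leq\lambda_i$. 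I would therefore split the verification into the set $S'$ returned by the algorithm and its complement.

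For the active miners ($s_i\in S'$), the values assigned in line 9 are exactly formula (\ref{equ:beststra-u}), which was obtained by solving the system of first-order conditions (\ref{equ: firstorder-u}) together with the aggregate identity (\ref{equ:sum-u}). Hence each active $\mu_i^*$ satisfies $\partial P_i/\partial\mu_i = 0$ by construction, and it remains only to confirm positivity. Because the miners are sorted by ascending $\lambda_i$ and formula (\ref{equ:beststra-u}) is decreasing in $\lambda_i$, it suffices to show that the last admitted miner $s_q$ has $\mu_q^* > 0$; monotonicity in the spirit of Corollary \ref{corollary1} then yields $\mu_i^* > 0$ for all $i\leq q$. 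For the base case $q=2$, direct substitution into (\ref{equ:beststra-u}) gives $\mu_1^*,\mu_2^* > 0$; for $q\geq 3$ the loop guard admitting $s_q$, namely $\lambda_q < \frac{\sum_{s_i\in S_{q-1}}\lambda_i}{q-2}$, rearranges to $1 - \frac{(q-1)\lambda_q}{\sum_{s_j\in S_q}\lambda_j} > 0$, which is precisely the statement $\mu_q^* > 0$.

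For the inactive miners ($s_i\notin S'$, i.e.\ $i>q$), I would substitute the aggregate identity (\ref{equ:sum-u}) to simplify the zero-strategy test. Since $\mu_j^*=0$ for all $j>q$, we have $\sum_{s_j\in S\setminus\{s_i\}}\mu_j^* = \sum_{s_j\in S_q}\mu_j^* = \frac{RN(q-1)}{\sum_{s_j\in S_q}\lambda_j}$, so the test reduces to $\frac{\sum_{s_j\in S_q}\lambda_j}{q-1}\leq\lambda_i$. The loop terminates either because $q=n$ (no inactive miners) or because $\lambda_{q+1}\geq\frac{\sum_{s_j\in S_q}\lambda_j}{q-1}$; combined with the sorting $\lambda_{q+1}\leq\lambda_i$ for every $i>q$, this yields exactly the required inequality. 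Thus each inactive $\mu_i^*=0$ is a best response as well, and every miner is simultaneously best-responding, so $\bm{\mu^*}$ is a Nash equilibrium.

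I expect the main obstacle to be the inactive-miner case rather than the active one: the active coordinates satisfy the first-order condition automatically, whereas showing that $0$ is genuinely optimal for the excluded miners hinges on recognizing the algebraic identity $\frac{RN}{\sum_{s_j\in S_q}\mu_j^*}=\frac{\sum_{s_j\in S_q}\lambda_j}{q-1}$ and matching the algorithm's termination guard to the best-response threshold in (\ref{equ:beststrategy}). A secondary point requiring care is confirming that the greedy admission rule is self-consistent, in that each newly admitted miner keeps all previously admitted strategies positive; this again follows from the monotonicity of (\ref{equ:beststra-u}) in $\lambda_i$. Once these two inequalities are secured, the best-response characterization closes the argument.
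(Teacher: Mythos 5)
Your proposal is correct and follows essentially the same route as the paper's own proof: active miners in $S'$ satisfy the first-order condition (\ref{equ: firstorder-u}) by construction of formula (\ref{equ:beststra-u}), and for excluded miners the aggregate identity (\ref{equ:sum-u}) combined with the loop's termination guard and the ascending sort of $\lambda_i$ reduces their situation to the zero-best-response threshold in (\ref{equ:beststrategy}). Your extra step verifying strict positivity of the admitted strategies (via the loop guard and monotonicity in $\lambda_i$) is a detail the paper leaves implicit but does not change the structure of the argument.
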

\begin{proof}
	For miners in $S'$, their strategies are calculated by equation (\ref{equ:beststra-u}), it's clear that these miners 
	get the current best strategies as the first order of $P_i(\cdot)$ equals to $0$ for $s_i \in S'$.
	To prove the theorem, we only need to prove that for any miner $s_j \in S \backslash S'$, its current best strategy is $0$.
	According to the description of Algorithm \ref{algorithm:miner}, we have 
	\begin{gather} \label{eql:14}
		\lambda_j \geq \frac{\sum\nolimits_{s_i \in S'}\lambda_i}{|S'| - 1},  \forall s_j \in S \backslash S'.
	\end{gather}
	From equation (\ref{equ:sum-u}), we know that
	$\sum\nolimits_{s_i \in S'}\lambda_i = \frac{RN(|S'|-1)}{\sum\nolimits_{s_i \in S'}\mu_i }$, substituting it into
	equation (\ref{eql:14}), we obtain that
	$\frac{RN}{\sum\nolimits_{s_i \in S'}\mu_i } \leq \lambda_j$ for any $s_j \in S \backslash S'$.
	As $s_j \notin S'$, we have $\sum\nolimits_{s_i \in S'}\mu_i = \sum\nolimits_{s_i \in S\backslash \{s_j\}}\mu_i $.
	Therefore, $\frac{RN}{\sum\nolimits_{s_i \in S\backslash \{s_j\}}\mu_i } \leq \lambda_j$ for any $s_j \in S \backslash S'$.
	According to equation (\ref{equ:beststrategy}), we know that the current best strategy for any miner 
	$s_j \in S \backslash S'$ is $0$. Thus the theorem holds.
\end{proof}

The following corollary helps us to prove the uniqueness of the Nash equilibrium for the miners' sub-game.
\begin{corollary} \label{corollary3}
	Given any Nash equilibrium $\bm{\mu^{ne}}=\{\mu^{ne}_1, 
	\mu^{ne}_2, \dots, \mu^{ne}_n\}$ for the miners' sub-game, let $S_h$ be the set of miners with a non-zero 
	strategy, then we have $S_h = S'$, where $S'$ is got by Algorithm \ref{algorithm:miner}, and $|S'| = p$.
\end{corollary}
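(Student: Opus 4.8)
The plan is to prove that every Nash equilibrium has exactly the support produced by Algorithm \ref{algorithm:miner}, by combining the monotonicity of Corollary \ref{corollary1} with a characterization of which support sizes are admissible. First I would note that Corollary \ref{corollary1} applies to \emph{any} Nash equilibrium $\bm{\mu^{ne}}$, since its proof only invokes the first-order conditions that hold at any profile of mutual best responses. Hence, after sorting in ascending order of $\lambda_i$, we have $\mu_1^{ne} \geq \mu_2^{ne} \geq \cdots \geq \mu_n^{ne} \geq 0$, so the support $S_h$ is a \emph{prefix}: if $p = \max\{i : \mu_i^{ne} > 0\}$, then every $i \leq p$ satisfies $\mu_i^{ne} \geq \mu_p^{ne} > 0$, giving $S_h = \{s_1, \dots, s_p\}$, with $p \geq 2$ by Corollary \ref{corollary2}.

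Next I would translate the equilibrium conditions on $\{s_1,\dots,s_p\}$ into two inequalities on the size $p$. Summing the first-order conditions over the support gives, as in equation (\ref{equ:sum-u}), $\sum_{s_i \in S_h}\mu_i^{ne} = \frac{RN(p-1)}{\sum_{s_i \in S_h}\lambda_i}$. Since $\mu_p^{ne}>0$, the best-response formula (\ref{equ:beststra-u}) forces the \emph{inclusion} inequality $\lambda_p < \frac{\sum_{s_i \in S_{p-1}}\lambda_i}{p-2}$ (vacuous when $p=2$); and since the best response of $s_{p+1}\notin S_h$ must be $0$, the first case of (\ref{equ:beststrategy}) combined with the sum above yields the \emph{exclusion} inequality $\lambda_{p+1} \geq \frac{\sum_{s_i \in S_h}\lambda_i}{p-1}$. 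These are precisely the conditions that the while-loop of Algorithm \ref{algorithm:miner} enforces at its stopping point, so $S'$ is itself an admissible support; it remains only to show that these two inequalities pin $p$ down uniquely.

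The crux is a monotonicity property of the exclusion condition. Writing $E(k)$ for the statement $\lambda_{k+1}\geq \frac{\sum_{i\leq k}\lambda_i}{k-1}$, I would prove $E(k)\Rightarrow E(k+1)$: from $(k-1)\lambda_{k+1}\geq\sum_{i\leq k}\lambda_i$ and the sorting inequality $\lambda_{k+2}\geq\lambda_{k+1}$, we get $k\lambda_{k+2}\geq k\lambda_{k+1} = (k-1)\lambda_{k+1}+\lambda_{k+1}\geq \sum_{i\leq k}\lambda_i+\lambda_{k+1}=\sum_{i\leq k+1}\lambda_i$, which is exactly $E(k+1)$. Therefore there is a single threshold size below which $E$ fails and at or above which $E$ holds. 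Observing that the inclusion inequality at size $p$ is precisely the negation of $E(p-1)$, an admissible $p$ must satisfy $E(p)$ and $\lnot E(p-1)$, i.e. it must be that threshold; and that threshold is exactly where Algorithm \ref{algorithm:miner} halts. Hence $p=|S'|$ and $S_h=S'$.

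I expect the main obstacle to be the uniqueness step, namely establishing and correctly deploying the monotonicity $E(k)\Rightarrow E(k+1)$; once that inequality is in hand the rest is bookkeeping. One must also verify the boundary cases: $p=2$, where the inclusion inequality degenerates to $\lambda_1 > 0$ and matches the algorithm's initialization $S'=\{s_1,s_2\}$; and $p=n$, where there is no miner $s_{n+1}$, the exclusion condition is vacuously satisfied, and the loop terminates by the guard $q<n$.
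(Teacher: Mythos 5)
Your proposal is correct, and while it uses the same raw material as the paper (the prefix structure from Corollary \ref{corollary1}, the bound $h\geq 2$ from Corollary \ref{corollary2}, the sum formula (\ref{equ:sum-u}), and the best-response conditions (\ref{equ:beststrategy}) and (\ref{equ:beststra-u})), its uniqueness mechanism is genuinely different --- and in fact more sound than the paper's. The paper proves $h=p$ (where $h$ is the equilibrium support size and $p=|S'|$; note you recycle the letter $p$ for $h$, which clashes slightly with the statement's notation) by a two-sided contradiction. Its case $h<p$ is the positive-derivative computation (\ref{equ:15}), which is exactly your exclusion inequality read contrapositively, so there the two arguments coincide. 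But in the case $h>p$ the paper asserts the chain $\lambda_{p+1}\geq\frac{\sum_{s_i\in S'}\lambda_i}{|S'|-1}\geq\frac{\sum_{s_i\in S_h}\lambda_i}{|S_h|-1}$ and concludes $\mu_{p+1}^{ne}\leq 0$; that middle inequality is backwards (since every $\lambda_i$ with $i>p$ is at least $\frac{\sum_{s_j\in S'}\lambda_j}{|S'|-1}$, one actually gets $\frac{\sum_{s_i\in S_h}\lambda_i}{|S_h|-1}\geq\frac{\sum_{s_i\in S'}\lambda_i}{|S'|-1}$), and the claimed conclusion can fail at the witness $s_{p+1}$: with $\lambda_1=\lambda_2=1$, $\lambda_3=2$, $\lambda_4=10$ the algorithm stops at $p=2$, yet for a hypothetical support of size $h=4$ formula (\ref{equ:beststra-u}) gives $\mu_3>0$; the contradiction only surfaces at $s_4$. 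Your monotonicity lemma $E(k)\Rightarrow E(k+1)$ is precisely what repairs this case: it propagates the stopping condition $E(p)$ up to $E(h-1)$, which forces $\lambda_h\geq\frac{\sum_{s_i\in S_h}\lambda_i}{h-1}$ and hence $\mu_h^{ne}\leq 0$, a contradiction at the correct witness $s_h$. So where the paper compares the two supports head-on (with an unjustified inequality), you characterize the admissible support sizes by ``$E(p)$ and not $E(p-1)$'' and show that $E$ has a single threshold; this buys rigor at the modest cost of the boundary bookkeeping at $p=2$ and $p=n$, which you carried out correctly.
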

\begin{proof}
	Assume that miners have been sorted in ascending order of $\lambda_i$. According to
	Corollary $\ref{corollary1}$, we know that $\mu^{ne}_1 \geq \mu^{ne}_2 \geq \dots \geq \mu^{ne}_n \geq 0$.
	Suppose that $S_h = \{s_1, s_2, \dots, s_h\}$, i.e. $|S_h| = h$. To prove $S_h = S'$, we only need to prove that
	$h = p$. If $h > p$, then $s_{p+1} \in S_h$, from description of Algorithm \ref{algorithm:miner}, we have 
	$\lambda_{p+1} \geq \frac{\sum\nolimits_{s_i \in S'}\lambda_i}{|S'| - 1} \geq \frac{\sum\nolimits_{s_i \in S_h}\lambda_i}
	{|S_h| - 1}$, substituting this into equation (\ref{equ:beststra-u}), we obtain that $\mu_{p+1}^{ne} \leq 0$, which contradicts
	$s_{p+1} \in S_h$. If $h < p$, then we have $\mu_{h+1}^{ne} = 0$ and $\lambda_{h+1} < \frac{\sum\nolimits_{s_i \in S_h}
	\lambda_i}{|S_h|-1}$, according to equation (\ref{equ: firstorder-u}), 
	the first order derivative of $P_{h+1}(\cdot)$ with respect to $\mu_{h+1}$ when $\mu_{h+1} = \mu_{h+1}^{ne} = 0$ is
	\begin{gather} \label{equ:15}
		\begin{split}
			&\frac{\partial P_{h+1}}{\partial \mu_{h+1}}(0|\bm{\mu^{ne}_{-(h+1)}}) 
			= RN\frac{\sum\nolimits_{s_j \in S} \mu_j - 0} {(\sum\nolimits_{s_j\in S} \mu_j^{ne})^2}
			- \lambda_{h+1} \\
			&= \frac{RN}{\sum\nolimits_{s_j\in S} \mu_j^{ne}} - \lambda_{h+1} = 
			\frac{RN}{\sum\nolimits_{s_j\in S_h} \mu_j^{ne}} - \lambda_{h+1},
		\end{split}
	\end{gather}
	where $\bm{\mu^{ne}_{-(h+1)}} = \bm{\mu^{ne}} \backslash \{\mu^{ne}_{h+1}\}$.
	
	According to equation (\ref{equ:sum-u}), we have $\sum\nolimits_{s_j \in S_h} \mu_j^{ne} = \frac{RN(h-1)}
	{\sum\nolimits_{s_j \in S_h}\lambda_j}$, substituting it into equation (\ref{equ:15}), we have that
	\begin{gather}
		\frac{\partial P_{h+1}}{\partial \mu_{h+1}}(0|\bm{\mu^{ne}_{-(h+1)}}) 
		= \frac{\sum\nolimits_{s_j \in S_h}\lambda_j}{h-1} - \lambda_{h+1}.
	\end{gather}
	As $\lambda_{h+1} < \frac{\sum\nolimits_{s_i \in S_h}\lambda_i}{|S_h|-1} = \frac{\sum\nolimits_{s_i \in S_h}\lambda_i}
	{h-1}$, we know that $\frac{\partial P_{h+1}}{\partial \mu_{h+1}}(0|\bm{\mu^{ne}_{-(h+1)}})  > 0$, which implies that 
	miner $s_{h+1}$ can improve its profit by increasing its strategy $\mu^{ne}_{h+1}$. This contradicts that $\bm{\mu^{ne}}$
	is an Nash equilibrium. Therefore, we have $h = p$, and thus the corollary holds.
\end{proof}

\begin{theorem} \label{theorem3}
	The miners' sub-game has a unique Nash equilibrium point.
\end{theorem}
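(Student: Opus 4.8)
The plan is to reduce the uniqueness claim almost entirely to Corollary~\ref{corollary3}, which has already done the substantive work of pinning down the \emph{support} of every Nash equilibrium. Since Theorem~1 guarantees that an equilibrium exists, it suffices to show that any two Nash equilibria must agree component by component. The guiding idea is that once the set of active miners (those with non-zero strategy) is forced to be a fixed set, the equilibrium strategies are determined by a closed-form expression, leaving no remaining freedom.

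First I would take an arbitrary Nash equilibrium $\bm{\mu^{ne}} = \{\mu_1^{ne}, \mu_2^{ne}, \dots, \mu_n^{ne}\}$ and, after sorting the miners in ascending order of $\lambda_i$, invoke Corollary~\ref{corollary3} to conclude that its set of active miners is exactly $S'$, the set returned by Algorithm~\ref{algorithm:miner}, with $|S'| = p$. By Corollary~\ref{corollary1} the ordering of strategies means $S'$ coincides with the prefix $\{s_1, \dots, s_p\}$. In particular, every inactive miner $s_j \notin S'$ has $\mu_j^{ne} = 0$, so the strategies of all inactive miners are determined unambiguously and identically across any two equilibria.

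Next I would treat the active miners $s_i \in S'$. Because each such miner has $\mu_i^{ne} > 0$, its best-response condition (equation~(\ref{equ: firstorder-u})) holds with first-order derivative equal to zero. Summing this condition over all $s_i \in S'$ reproduces the derivation of equation~(\ref{equ:sum-u}) with $q = p$, giving $\sum_{s_j \in S'}\mu_j^{ne} = \frac{RN(p-1)}{\sum_{s_i \in S'}\lambda_i}$, a quantity depending only on the fixed support $S'$ and the fixed parameters $R$, $N$, and $\{\lambda_i\}$. Substituting this unique total back into the per-miner first-order condition yields equation~(\ref{equ:beststra-u}), which expresses each $\mu_i^{ne}$ solely in terms of $p$, the $\lambda_j$'s, and the constants. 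Hence every active component is likewise uniquely determined.

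Combining the two cases, the vector $\bm{\mu^{ne}}$ is fully pinned down by $S'$, so any two Nash equilibria are identical and the equilibrium is unique. I do not expect a genuine obstacle at this stage: the real difficulty — showing that the support cannot vary from one equilibrium to another — was already resolved in Corollary~\ref{corollary3} by ruling out both $h > p$ and $h < p$. What remains here is only the routine observation that the fixed-support system of first-order conditions admits a single solution, which follows directly from the already-derived closed forms (\ref{equ:sum-u}) and (\ref{equ:beststra-u}).
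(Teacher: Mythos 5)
Your proposal is correct and follows essentially the same route as the paper's own proof: both use Corollary~\ref{corollary3} to fix the support of any equilibrium at $S'$, then observe that the first-order conditions summed over $S'$ yield the unique total in equation~(\ref{equ:sum-u}) and hence the unique per-miner closed form in equation~(\ref{equ:beststra-u}). Your write-up merely makes explicit the substitution steps that the paper compresses into its citation of equations~(\ref{equ11})--(\ref{equ:beststra-u}).
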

\begin{proof}
	According to Corollary \ref{corollary3}, 
	the miners' sub-game can be seen as a game among miners in $S'$, as for any miner $s_i \in S \backslash S'$
	we always have $\mu_i = 0$ in a Nash equilibrium. 
	Therefore, we only need to prove that the sub-game of miners in $S'$ has a unique Nash equilibrium point.
	
	As the strategy of each miner in $S'$ is positive, and the profit $P_i(\cdot)$ for any $s_i \in S'$ is a concave function
	according to equation (\ref{equ7:concave}), each miner will get its best strategy when the first order derivate of $P_i(\cdot)$
	equals to 0. As shown in equations (\ref{equ11})-(\ref{equ:beststra-u}), we get unique solutions by solving the set of
	functions that the first order derivate of $P_i(\cdot)$ equals to 0 for each $s_i \in S'$. Therefore, the miners' sub-game
	among miners in $S'$ has a unique Nash equilibrium, and thus we can conclude that Theorem \ref{theorem3} holds.
\end{proof}

\subsection{Analysis of the blockchain platform's sub-game}
According to the analysis in Section \ref{subsec:miners}, for any value of reward $R$ given by the blockchain platform, there 
always exists a unique Nash equilibrium for the miners. Therefore, given an value of $R$, the blockchain platform has a unique
utility, and it can maximize its utility by setting an optimal $R$. Substituting the result of Algorithm \ref{algorithm:miner} into
equation (\ref{equ:utility1}) and combining equation (\ref{equ:sum-u}), we have 
\begin{gather}
	\begin{split}
		U &= \alpha \cdot \left[\sigma \left(\beta \cdot \sum\nolimits_{s_i\in S'}\mu_i^*\right) - \frac{1}{2}\right]  - R\\
		&= \alpha \cdot \left[ \sigma \left(\beta \cdot \frac{RN(q-1)}{\sum\nolimits_{s_i\in S'} \lambda_i} \right)- \frac{1}{2}\right] - R \\
		&= \alpha \cdot \left[\sigma \left(\beta X R\right)- \frac{1}{2}\right]  - R,
	\end{split}
\end{gather}
where $X = \frac{N(q-1)}{\sum\nolimits_{s_i\in S'} \lambda_i} $.

\begin{theorem}
	There exists a unique Stackelberg equilibrium $(\bm{\mu^*}, R^*)$ in our proposed Stackelberg game, where $\bm{\mu^*}$
	and $R^*$ are optimal strategies for miners and blockchain platform.
\end{theorem}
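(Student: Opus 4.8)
The plan is to exploit the backward-induction structure of the Stackelberg game. By Theorem \ref{theorem3}, for every reward $R$ the lower-stage miners' sub-game admits a \emph{unique} Nash equilibrium $\bm{\mu^*}(R)$, so the platform's payoff collapses to a function of the single scalar variable $R$, namely $U(R) = \alpha[\sigma(\beta X R) - \frac{1}{2}] - R$ with $X = \frac{N(q-1)}{\sum_{s_i \in S'}\lambda_i}$. The first thing I would verify is that $X$ is a genuine \emph{constant}: inspecting Algorithm \ref{algorithm:miner}, the participating set $S'$ and its cardinality $q$ are determined solely by the sorted unit prices through the test $\lambda_{q+1} < \frac{\sum_{s_i \in S'}\lambda_i}{|S'|-1}$, which carries no dependence on $R$. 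Hence $X > 0$ is fixed, and the upper-stage problem reduces to maximizing a smooth one-dimensional function over the compact interval $[0, B]$.

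The core step is to show that $U(R)$ is strictly concave on $[0, B]$, which immediately yields a unique maximizer. Using the identity $\sigma'(x) = \sigma(x)(1 - \sigma(x))$ I would compute $U'(R) = \alpha \beta X \, \sigma(\beta X R)(1 - \sigma(\beta X R)) - 1$ and then $U''(R) = \alpha (\beta X)^2 \, \sigma''(\beta X R)$, where $\sigma''(x) = \sigma'(x)(1 - 2\sigma(x))$. For $R \geq 0$ the argument satisfies $\beta X R \geq 0$, which forces $\sigma(\beta X R) \geq \frac{1}{2}$, so $1 - 2\sigma(\beta X R) \leq 0$ and therefore $U''(R) < 0$ on $(0, B]$. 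Strict concavity follows, and a strictly concave continuous function on a non-empty compact interval attains its maximum at a single point $R^*$.

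I would finish by describing $R^*$ through the natural case split dictated by the sign of $U'(0) = \frac{\alpha \beta X}{4} - 1$: if $U'(0) \leq 0$ then $U$ is decreasing and $R^* = 0$; otherwise $U'(0) > 0$ and, since $U'$ is strictly decreasing, either $U'$ stays positive on $[0, B]$ (yielding the corner solution $R^* = B$) or it crosses zero exactly once at the interior stationary point solving $\alpha \beta X \, \sigma(\beta X R)(1 - \sigma(\beta X R)) = 1$. In every case the maximizer is unique, and pairing it with the unique lower-stage response produces the unique Stackelberg equilibrium $(\bm{\mu^*}(R^*), R^*)$. The main obstacle I anticipate is not the concavity computation itself but justifying that the reduced utility $U(R)$ is well defined and smooth for all $R$ in $[0, B]$ — this hinges entirely on the invariance of $S'$ (and hence of $X$) with respect to $R$, without which $U$ would be only piecewise-defined and strict concavity would demand a separate argument, plus a continuity check, across each piece.
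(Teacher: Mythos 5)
Your proposal is correct and follows essentially the same route as the paper: backward induction collapses the problem to maximizing $U(R) = \alpha[\sigma(\beta X R) - \tfrac{1}{2}] - R$ over $[0,B]$, the sigmoid identity $\sigma' = \sigma(1-\sigma)$ gives the same first and second derivatives, and the observation that $\sigma(\beta X R) \geq \tfrac{1}{2}$ for $R \geq 0$ yields concavity, hence a unique $R^*$, which combined with Theorem \ref{theorem3} gives the unique equilibrium. Your two refinements — explicitly checking that $S'$, $q$, and hence $X$ are determined by the $\lambda_i$ alone and do not vary with $R$, and noting that $U''$ vanishes at $R=0$ so strictness must be argued on $(0,B]$ — are points the paper glosses over, and they tighten rather than change the argument.
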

\begin{proof}
	As the definition of the blockchain platform's sub-game, the utility function $U(\cdot)$ is defined with $R\in[0, B]$. 
	We then calculate the first order and second order derivatives of $U(\cdot)$ with respect to $R$,
	\begin{gather}
		\frac{\partial U}{\partial R} = \alpha \beta X  \sigma(\beta XR) \left(1 - \sigma(\beta XR)\right) -1 ,
	\end{gather}
	\begin{gather}
		\frac{\partial^2 U}{\partial R^2} = \alpha \beta^2 X^2  \sigma(\beta XR)  (1-\sigma(\beta XR)) 
		 (1 - 2\sigma(\beta XR)).
	\end{gather}
	
	As $\beta XR \geq 0$, the range of the sigmoid function $\sigma(\beta XR)$ is $[\frac{1}{2}, 1)$, and then we have 
	$1 - \sigma(\beta XR) > 0$ and $1 - 2\sigma(\beta XR) \leq 0$. Thus $\frac{\partial^2 U}{\partial R^2} \leq 0$ holds.
	Therefore the utility function $U(\cdot)$ is strictly concave with $R$ for $R \in [0, B]$. It means that a 
	unique $R^*$ can be found to maximize $U(\cdot)$. Combined with Theorem \ref{theorem3}, we conclude that
	there exists a unique Stackelberg equilibrium in our proposed Stackelberg game.	
\end{proof}

The maximization of $U(\cdot)$ is achieved either at the extreme point where the first order derivative of $U(\cdot)$ equals
to $0$, or at the boundary of domain area (i.e., $R = 0 ~or ~B$). 
If $\alpha \beta X < 4$, we have $\frac{\partial U}{\partial R} < 0,~\forall R \in [0, B]$, then $U(\cdot)$ is a decreasing function
in $[0, B]$, and the best strategy of the blockchain platform is $R^*=0$.
If $\alpha \beta X \geq 4$, by solving the equation $\frac{\partial U}{\partial R} = 0$, we have $\sigma(\beta XR) = 
\sqrt{\frac{1}{4} - \frac{1}{\alpha \beta X}} + \frac{1}{2}$, and thus the best strategy of the blockchain platform is 
$R^* = \min \left\{\frac{1}{\beta X} \log \left ( \frac{\frac{1}{2} + \sqrt{\frac{1}{4} - \frac{1}{\alpha \beta X} } }{\frac{1}{2} - 
\sqrt{\frac{1}{4} - \frac{1}{\alpha \beta X} } } \right), B\right\}$.
In summary, we have
\begin{gather}
	R^* = 
	\begin{cases}
		0, ~~~~~~~~~~~~~~~~~~~~~~~~~~~~~~~~~~~~~\text{ if } \alpha \beta X < 4, \\
		\min \left\{\frac{1}{\beta X} \log \left ( \frac{\frac{1}{2} + \sqrt{\frac{1}{4} - \frac{1}{\alpha \beta X} } }{\frac{1}{2} - 
\sqrt{\frac{1}{4} - \frac{1}{\alpha \beta X} } } \right), B\right\}, \text{otherwise}.
	\end{cases}
\end{gather}


\section{Performance Evaluation}\label{sec:Simulation}
In this section, we conduct extensive simulations to evaluate the performance of our proposed incentive mechanism for
blockchain-based internet of things.
\subsection{Experimental settings}
In our experiments, we set the basic parameters of our problem as follows. We assume there are totally $1000$ IIoT devices
that are interested in participating in the blockchain mining process, i.e., $|S| = 1000$. The unit price $\lambda_i$ of each 
miner purchasing computational power from edge servers is uniformly range from $100$ to $105$. 
For the blockchain platform utility model, $\alpha$ is set to be $10000$, $\beta$ is set to be $0.001$, and $B$ is 
set to be $2000$.
We assume that it takes an average of $10$ minutes for the blockchain platform to generate a new block, and thus 
it will generate an average of $144$ new blocks per day, i.e., $N$ is set to be $144$.
Unless otherwise stated, the above parameters will be set as default settings.
Each value in figures in this section is the average of $100$ runs.

\begin{figure} 
	\centering
	\includegraphics[width=.45\textwidth]{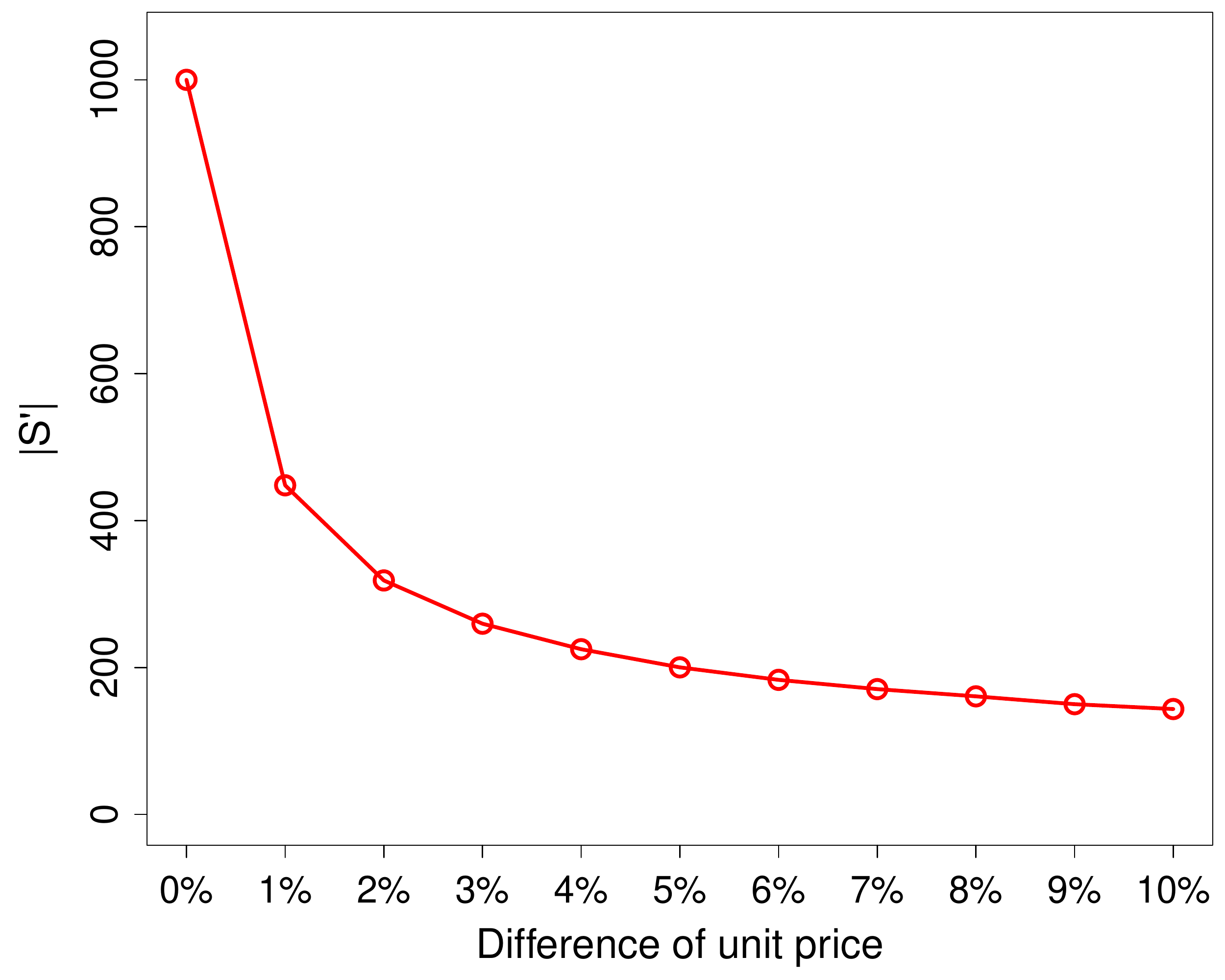}
	\caption{ Impact of the difference of unit price on $|S'|$.}
	\label{fig:nonzero-maxPri}
\end{figure}

\begin{figure} 
	\centering
	\includegraphics[width=.45\textwidth]{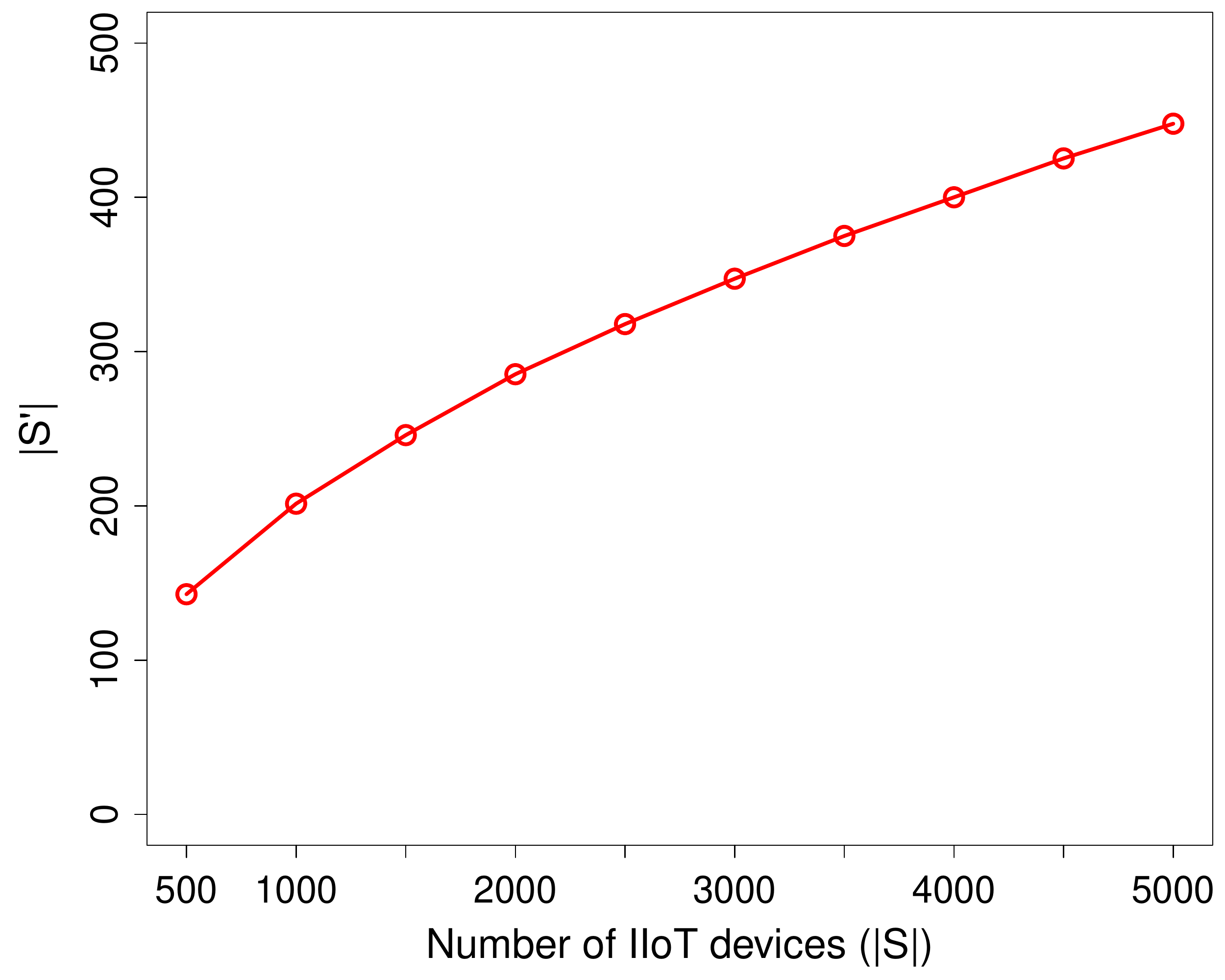}
	\caption{ Impact of number of IIoT devices on $|S'|$.}
	\label{fig:nonzero-minerNum}
\end{figure}

\subsection{Results and Analyses}
\emph{1) Number of participating miners ($|S'|$):} As described in Algorithm \ref{algorithm:miner}, only the miners in $S'$ will 
purchase computational power to participate in the mining process, then we study the impact of unit price of computational
power and number of IIoT devices on $|S'|$.

In Fig. \ref{fig:nonzero-maxPri}, the minimum unit price of computational power is set to be $100$, and $|S|$ is set to be $1000$. 
The difference of unit price represents the range of  $\lambda_i$ for each miner. For example, when the difference of unit 
price is set to be $2\%$, then $\lambda_i$ is randomly chosen in $[100, 102]$ for each miner. 
From Fig. \ref{fig:nonzero-maxPri} we can see that $|S'|$ decreases when the difference of unit price of 
computational power increases. This is because when the difference of unit price is large, $\lambda_i$ of each miner will 
become diverse, and thus there will be more miners violate the condition of Algorithm \ref{algorithm:miner}.
We can also see that the effect of the difference of the unit price of computational power is very significant,
even when the difference of unit price is as small as $1\%$, there are only about $44.8\%$ miners participate the 
blockchain mining process. 

In Fig. \ref{fig:nonzero-minerNum}, we fix the difference of unit price of computational power at $5\%$, and study how the 
number of IIoT devices affect $|S'|$. It can be seen that the growth of $|S'|$ does not have a linear relationship with
the number of IIoT devices. When $|S| = 500$, there are about $28.5\%$ miners participate in the blockchain mining process,
and when $|S|$ increased to $5000$, there are only about $9\%$ miners participate the blockchain mining process.

\begin{figure} 
	\centering
	\includegraphics[width=.45\textwidth]{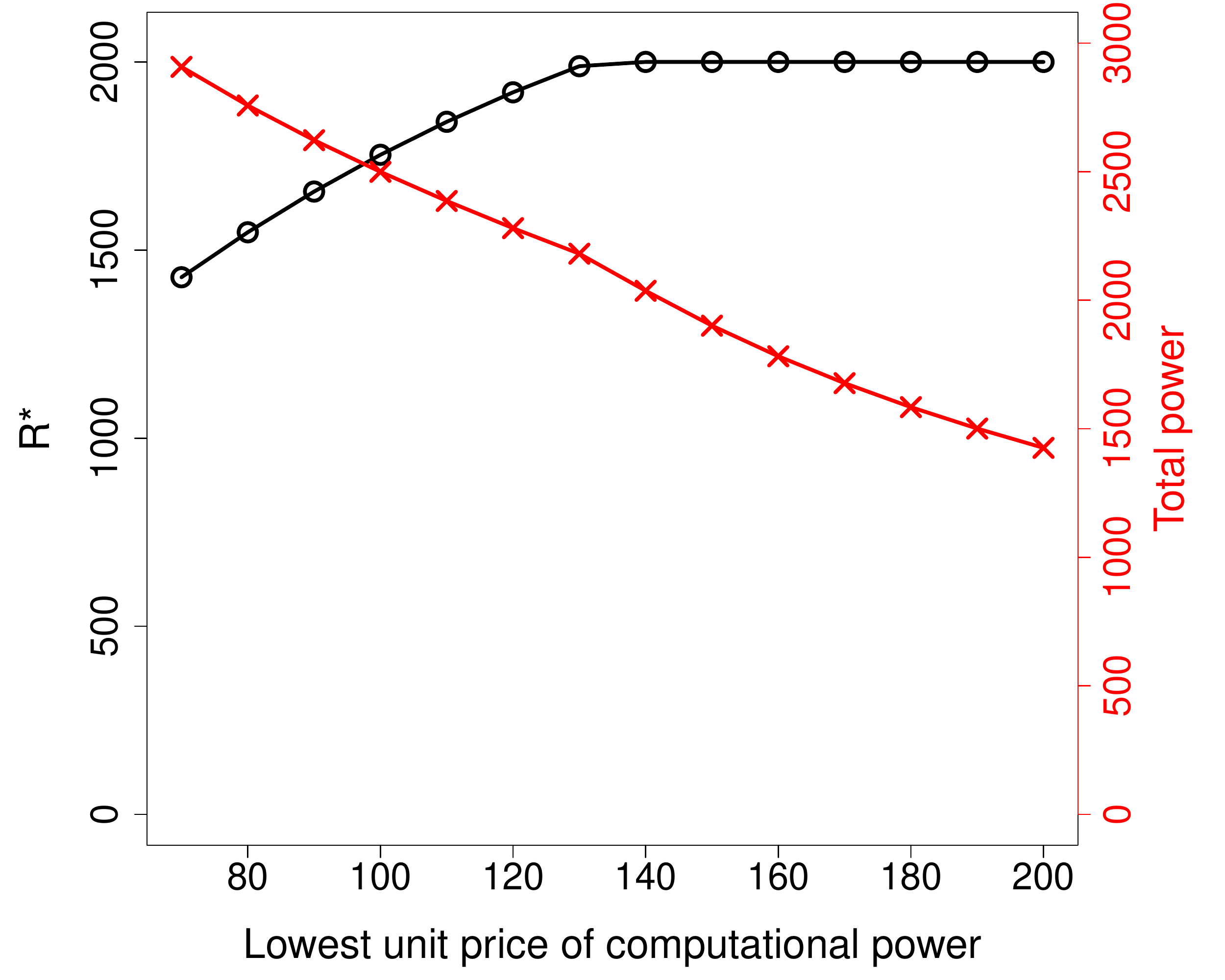}
	\caption{ Impact of the $\lambda_i$ on strategies.}
	\label{fig:unitPri-Strategy}
\end{figure}

\begin{figure} 
	\centering
	\includegraphics[width=.45\textwidth]{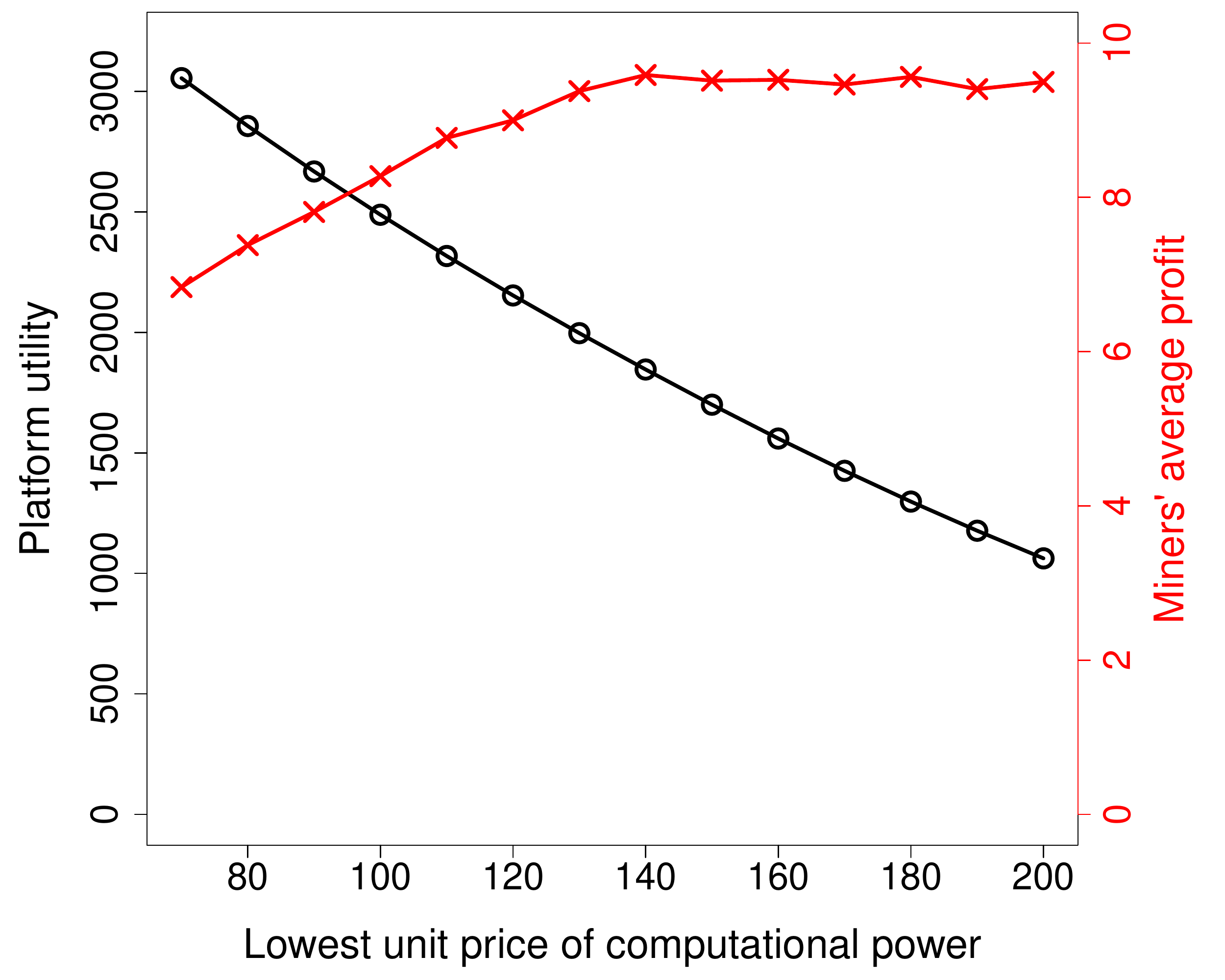}
	\caption{ Impact of $\lambda_i$ on platform utility and miners profits.}
	\label{fig:unitPri-Profit}
\end{figure}

\emph{2) Effect of unit price $\lambda_i$ on utilities and strategies:}
We fix the difference of the unit price of computational power at $5\%$, and study how $\lambda_i$ of each miner affect the 
utilities and strategies of blockchain platform and miners. 

Fig. \ref{fig:unitPri-Strategy} shows that when the unit price of computation power increases, 
the blockchain platform needs to improve the reward to achieve maximum utility until the maximum value of the reward is reached. 
And miners tend to purchase less computational power as it will cost more money. 
Combining Fig. \ref{fig:unitPri-Strategy} and Fig. \ref{fig:unitPri-Profit}, we can observe that the platform utility decreases as the 
unit price of computational power increases, this is because the platform improves the reward but the total computational 
power provided by miners decreases. 
We can also see that the average profit of miners has increased, although they have to pay more money to afford unit 
computational power. The reason is that miners reduce the amount of computational power they purchased, and 
the platform gives more rewards to them.

\begin{figure} 
	\centering
	\includegraphics[width=.45\textwidth]{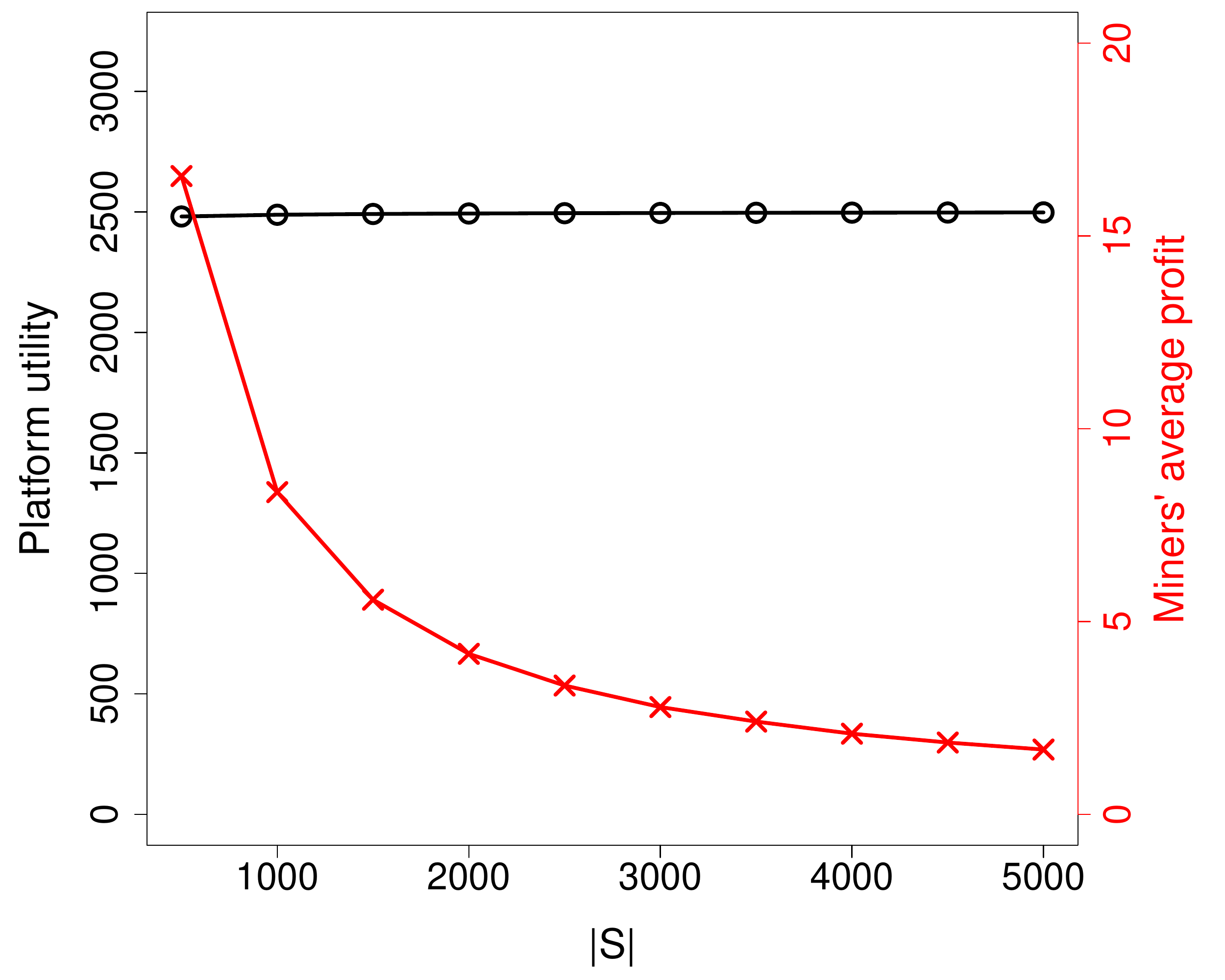}
	\caption{ Impact of $|S|$ on platform utility and miners' profits.}
	\label{fig:minerNum-Profit}
\end{figure}

\begin{figure} 
	\centering
	\includegraphics[width=.45\textwidth]{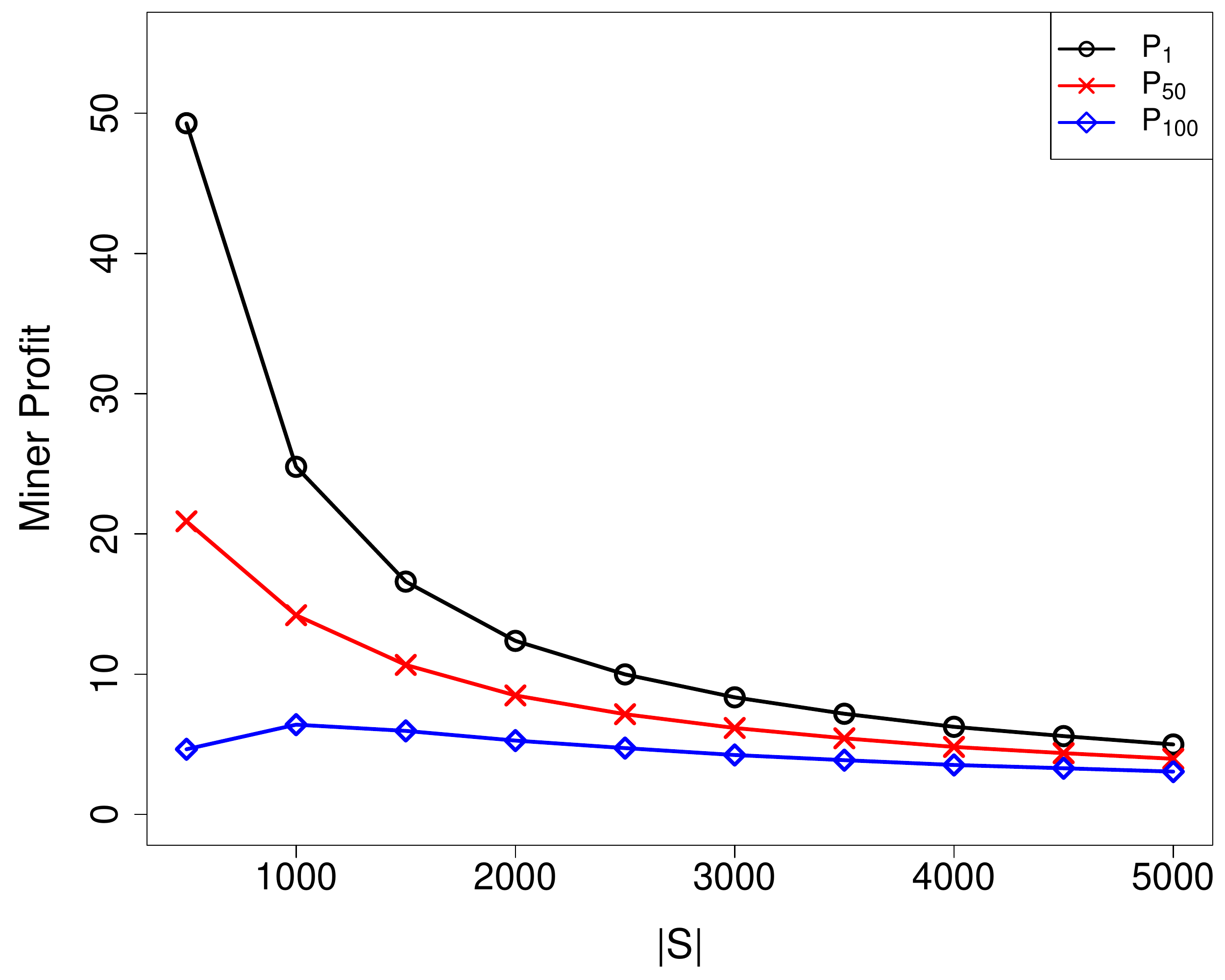}
	\caption{ Impact of $|S|$ on the $1st$, $50th$ and $100th$ miners' profits.}
	\label{fig:minerNum-Profit1-50-100}
\end{figure}

\emph{3) Effect of $|S|$ on platform utility and miners' profits:} 
We fix the range of the unit price of computational power at $[100, 105]$, and study the effect of $|S|$.

As shown in Fig. \ref{fig:minerNum-Profit}, no matter how many IIoT devices are in the blockchain network, the platform always
has a stable utility. However, miners' average profit will decrease as more miners will participate in the mining process and 
intensified the competition. 

Fig. \ref{fig:minerNum-Profit1-50-100} shows the profits of miners $s_1$, $s_{50}$ and $s_{100}$, note that all miners 
have been sorted in ascending order of $\lambda_i$ and renumbered. We can see that the profits of miners $s_1$ and $s_{50}$
decrease with the increases of $|S|$, the result is similar to that in Fig. \ref{fig:minerNum-Profit}. However, for
miner $s_{100}$, its profit increases when $|S|$ increases from $500$ to $1000$, then slowly decreases as $|S|$ increases.
The reason is that when $|S|$ increases from $500$ to $1000$, $\lambda_i$ of each miner becomes more tight, and then the
difference between its unit price of computational power and that of other miners in front of it decreases.
So miner $s_{100}$ become more competitive and thus can get more profits. In detail, when $|S| = 500$, 
$\lambda_1 = 100.011$, $\lambda_{50} = 100.504$ and $\lambda_{100} = 100.992$; when $|S| = 1000$, 
$\lambda_1 = 100.004$, $\lambda_{50} = 100.247$ and $\lambda_{100} = 100.496$.
We can also see that the unit price of computational power will significantly affect the profits of miners, especially when the
number of participating miners is small. For example, when $|S| = 500$, the profit of miner $s_1$ is $2.5$ times the profit
of the miner $s_{50}$ and $10.6$ times the profit of miner $s_{100}$, while the unit price of the computational power of the
miners $s_{50}$ and $s_{100}$ is $0.49\%$ and $0.98\%$ larger than that of $s_1$, respectively.

\section{Conclusion}\label{sec:Conclusions}

In this paper, we design an incentive mechanism for IIoT blockchain network that can be used to motivate IIoT devices to 
purchase more computational resources form edge servers to
participate in the blockchain mining process, thus that a secure blockchain network can be estabilished. 
We analyze the relationship between the security of the blockchain network and the total computational power of the entire 
network, and give the probability that an attacker can successfully tamper with the blockchain.
We model the interaction between blockchain platform and
IIoT devices as a two-stage Stackelberg game, in which the leader, i.e., the blockchain platform, first set the value of the reward 
to maximize its utility,
and miners act as followers, adjust their strategies to maximize their profits. We prove the existence and uniqueness of 
the Stackelberg equilibrium and design an efficient algorithm to compute the Stackelberg equilibrium point. We also 
conduct extensive simulations to evaluate the performance of our designs. Our work is helpful for the IIoT blockchain platform
to set a reasonable reward to build a secure blockchain network.


\ifCLASSOPTIONcaptionsoff
  \newpage
\fi



%
\bibliographystyle{IEEEtran}
\bibliography{paper}

%

\begin{IEEEbiography}[{\includegraphics[width=1in,height=1.25in,clip,keepaspectratio]{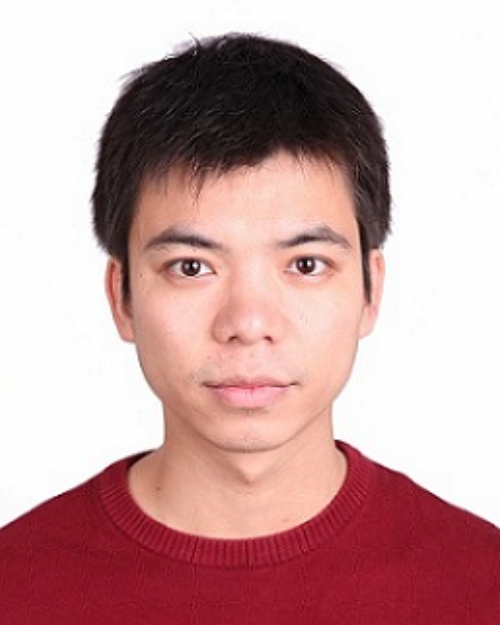}}]{Xingjian Ding}
Xingjian Ding received the BE degree in electronic information engineering from Sichuan University, Sichuan, China, in 2012. He received the M.S. degree in software engineering from Beijing Forestry University, Beijing, China, in 2017. Currently, he is working toward the PhD degree in the School of Information, Renmin University of China, Beijing, China. His research interests include wireless rechargeable sensor networks, algorithm design and analysis, and blockchain.
\end{IEEEbiography}

\begin{IEEEbiography}[{\includegraphics[width=1in,height=1.25in,clip,keepaspectratio]{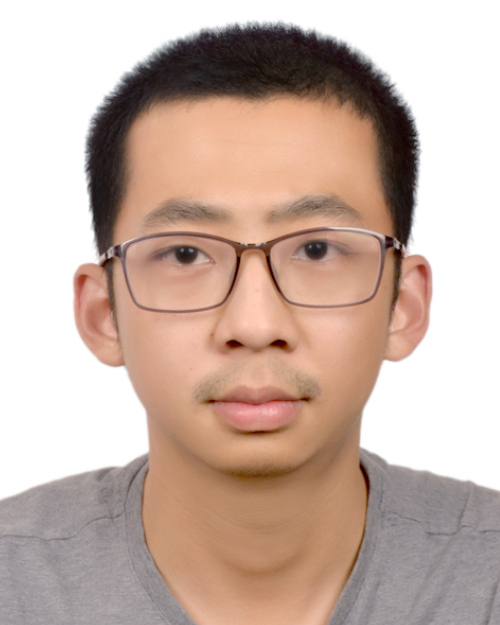}}]{Jianxiong Guo}
Jianxiong Guo is a Ph.D candidate in the Department of Computer Science at the University of Texas at Dallas. He received his BS degree in Energy Engineering and Automation from South China University of Technology in 2015 and MS degree in Chemical Engineering from University of Pittsburgh in 2016. His research interests include social networks, data mining, IoT application, blockchain, and combinatorial optimization.
\end{IEEEbiography}

\begin{IEEEbiography}[{\includegraphics[width=1in,height=1.25in,clip,keepaspectratio]{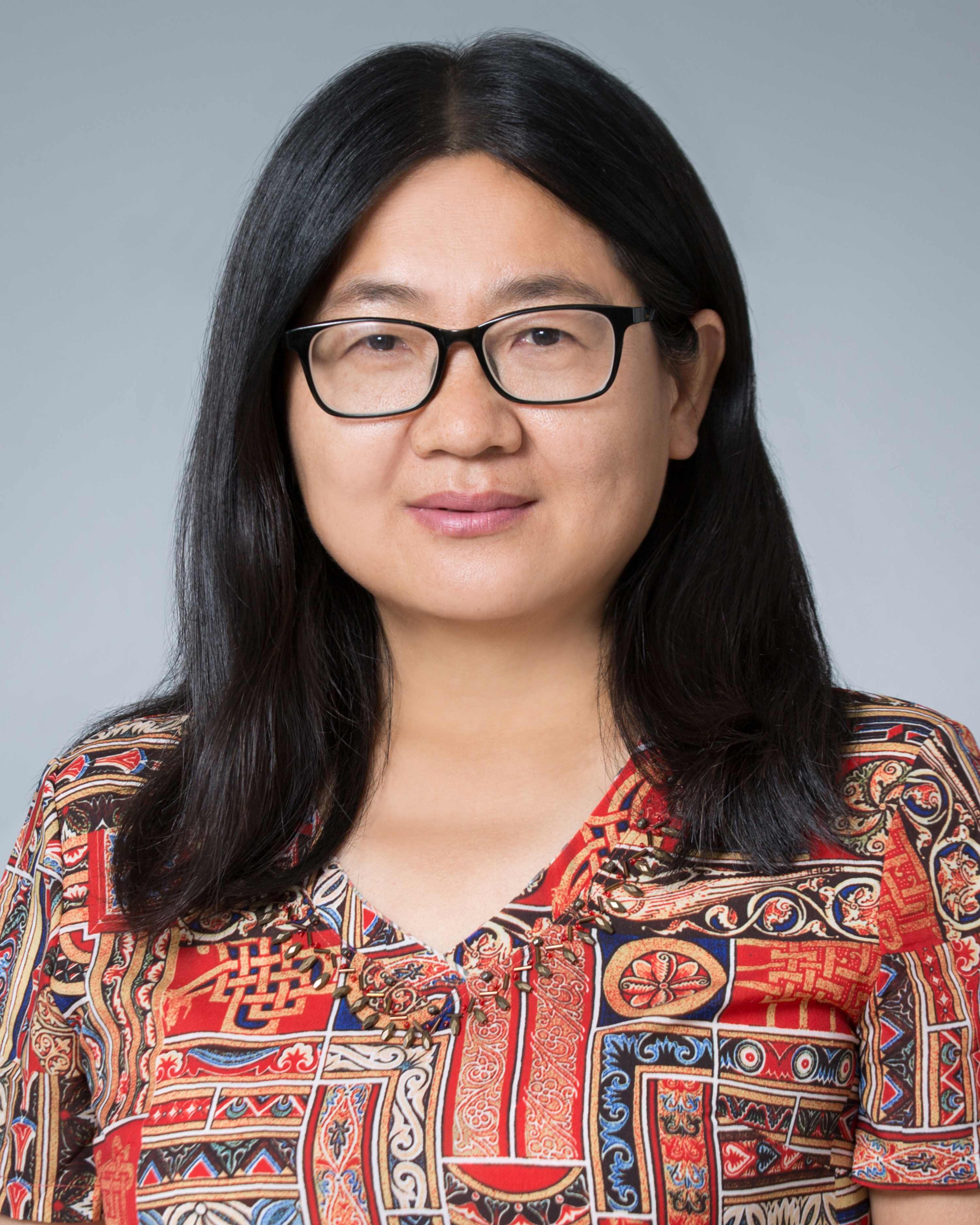}}]{Deying Li}
Deying Li is a professor of Renmin University of China. She received the B.S. degree and M.S. degree in Mathematics from Huazhong Normal University, China, in 1985 and 1988 respectively. She obtained the PhD degree in Computer Science from City University of Hong Kong in 2004. Her research interests include wireless networks, ad hoc \& sensor networks mobile computing, distributed network system, Social Networks, and Algorithm Design etc.
\end{IEEEbiography}

\begin{IEEEbiography}[{\includegraphics[width=1in,height=1.25in,clip,keepaspectratio]{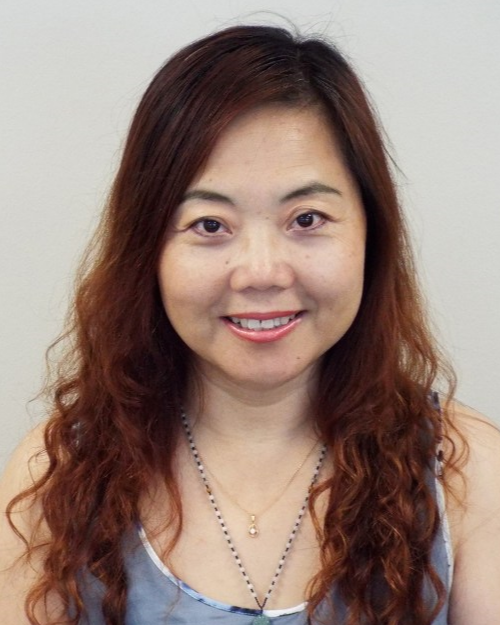}}]{Weili Wu}
Weili Wu received the Ph.D. and M.S. degrees from the Department of Computer Science, University of Minnesota, Minneapolis, MN, USA, in 2002 and 1998, respectively. She is currently a Full Professor with the Department of Computer Science, The University of Texas at Dallas, Richardson, TX, USA. Her research mainly deals in the general research area of data communication and data management. Her research focuses on the design and analysis of algorithms for optimization problems that occur in wireless networking environments and various database systems.
\end{IEEEbiography}





\end{document}